\begin{document}
\newtheorem{Remark}{Remark}
\newtheorem{remark}{Remark}
\newtheorem{thm}{Theorem}
\renewcommand{\algorithmicrequire}{\textbf{Input:}} 
\renewcommand{\algorithmicensure}{\textbf{Output:}}

\title{Design and Optimization on Successive RIS-assisted Multi-hop Wireless Communications
\thanks{This work is supported by the National Natural Science Foundation of China (NSFC) under Grants NO.12141107, and the Interdisciplinary Research Program of HUST (2023JCYJ012).}
}

%
\author{Rujing~Xiong, Jialong~Lu, Jianan~Zhang,~\IEEEmembership{Student Member,~IEEE,}
		Minggang~Liu,~\IEEEmembership{Member,~IEEE,}
		Xuehui~Dong,~\IEEEmembership{Student Member,~IEEE,}
		Tiebin~Mi,~\IEEEmembership{Member,~IEEE,}
        Robert~Caiming~Qiu,~\IEEEmembership{Fellow,~IEEE,}
\thanks{R.~Xiong et al. are with the School of Electronic Information and Communications, Huazhong University of Science and Technology, Wuhan 430074, China (e-mail: rujing@hust.edu.cn).}

}
\maketitle



\maketitle

\begin{abstract}
As an emerging wireless communication technology, reconfigurable intelligent surface (RIS) has become a basic choice for providing signal coverage services in scenarios with dense obstacles or long tunnels through multi-hop configurations. Conventional works of literature mainly focus on alternating optimization or single-beam calculation in RIS phase configuration, which is limited in considering energy efficiency, and often suffers from inaccurate channel state information (CSI), poor convergence, and high computational complexity. This paper addresses the design and optimization challenges for successive RIS-assisted multi-hop systems. Specifically, we establish a general model for multi-hop communication based on the relationship between the input and output electric fields within each RIS. Meanwhile, we derive the half-power beamwidth of the RIS-reflected beams, considering the beam direction. Leveraging these models and derivations, we propose deployment optimization and beam optimization strategies for multi-hop systems, which feature high aperture efficiency and significant gains in signal power. Simulation and prototype experiment results validate the effectiveness and superiority of the proposed systems and methods.

\end{abstract}

\begin{IEEEkeywords}
reconfigurable intelligent surface (RIS), multi-hop, aperture efficiency, half-power beamwidth, deployment optimization, beam optimization, prototype experiment.
\end{IEEEkeywords}

\section{Introduction}
\IEEEPARstart{T}{he} Reconfigurable intelligent surface (RIS) has emerged as a promising technique for wireless communication networks~\cite{cui2014coding, basar2019wireless,di2020smart}. By dynamically adjusting the reflection phase shifts of numerous passive units, RIS enables flexible wireless channel control and configuration, thereby significantly improving wireless signal transmission rates and reliability. In particular, RIS is a digitally controlled metasurface composed of a large number of cost-effective, well-designed passive reflecting units, each capable of independently manipulating the characteristics of incident electromagnetic waves, including phase, amplitude, and polarization. Due to its great potential, RIS has been widely investigated in the scope of wireless communications~\cite{wu2024intelligent}.

Despite the extensive literature on the design and optimization of various RIS-assisted wireless systems, prior works have primarily focused on enhancing wireless links through single signal reflection using one or multiple RISs~\cite{you2020channel,pan2020multicell,yang2021energy,nguyen2023leveraging,xiong2023fairbeamallocationsreconfigurable}. This approach may be inadequate to improve wireless signal quality under some adverse propagation conditions, such as indoor/outdoor environments with dense blockages or obstructions. Additionally, the studies indicated that an RIS should be deployed near the base station (BS) or user equipment (UE) for enhanced communication performance~\cite{wu2019intelligent,zhang2021intelligent,wu2021intelligent}. This requirement further complicates the deployment of RIS.

To address the above challenges, cooperative multi-hop reflections are developed~\cite{huang2021multi,liang2022multi,mei2022multi,abou2024intelligent,ma2023multi}. In~\cite{huang2021multi}, the authors studied the multi-reflection between a BS and multiple users, jointly designing the active and passive beamforming at BS and RISs based on deep reinforcement learning. The authors in~\cite{liang2022multi} investigated a multi-route multi-hop cascaded RIS-aided system, with the objective of maximizing the sum rate for multiple users. In ~\cite{mei2020cooperative,mei2022multi}, the optimal reflection path for multi-RIS beam routing problem is studied. Additionally, the authors in~\cite{ma2023multi} optimized the multi-reflection paths for interference management and throughput performance boosting in multi-hop multi-RIS cooperative communications. \cite{abou2024intelligent} explored the integration of multi-hop RISs and open radio access networks, in which the best RIS-assisted path was selected for the overall network sum-rate maximization. 

Multi-hop RIS plays a crucial role in facilitating communications within complex terrain and extensive urban environments~\cite{mei2020cooperative}. Compared to the single-reflection link, the multireflection link in general provides more degrees of freedom to bypass dense and scattered obstacles in complex environments. Moreover, the multireflection links offer higher beamforming gains, which can counteract the product-distance path loss that also increases with the number of RIS reflections~\cite{mei2022intelligent}.

In the current literature on phase configuration within multi-hop communications, the methods can be primarily classified into two categories. The first considers the channels between multiple RISs in a cascaded manner, and achieving the optimization objective through alternating optimizations~\cite{zhang2021rate,shen2023joint,liang2023energy,niu2022double}. These methods have two significant drawbacks. One is that RISs are passive reflectors with a large number of units, rendering the acquisition of channel state information between RISs particularly challenging. Additionally, as the number of RISs increases, the expression to solve optimization problems for each RIS becomes increasingly complex, the iterations are highly prone to converging to local suboptimal solutions, and the computational complexity escalates sharply.

Another feasible method involves considering RIS one by one, in which the phase configuration of each RIS can be determined efficiently according to the channel state information of inner RIS~\cite{han2020cooperative,mei2020cooperative,mei2021cooperative}, or the relative positions of devices in the communication link~\cite{ma2023multi}. This phase configuration method is frequently employed in optimal routing-path selection within multi-RIS networks.

For RIS-assisted communications, different feed source positions always lead to variations in the field distribution on the RIS surfaces, which in turn cause changes in directional gain. Our previous experimental results have also observed this phenomenon~\cite{xiong2023ris}. According to radiation theory in array antennas~\cite{nayeri2018reflectarray,balanis2016antenna}, this variability can be attributed to the aperture efficiency of RIS. As a passive reflector array, the magnitude and uniformity of the electromagnetic wave incident on the RIS's aperture determine the efficiency and directional power gain it can provide. These characteristics have mostly been ignored in existing studies on RIS phase configuration.

In this paper, we take into account the aperture efficiency and beamwidth of the reflecting beam and derive a closed-form expression for beamwidth. Based on the derivations, we and propose strategies for the deployment and beam optimization in multi-hop RIS-assisted communications. In beam radiation design, we employ our previously proposed MA algorithm to achieve multiple beams, whose computational efficiency, optimality, and robustness have been demonstrated in our earlier work~\cite{xiong2023fairbeamallocationsreconfigurable}. The proposed multi-hop system features high aperture efficiency and significant power gains.

\subsection{Contributions}
The main contributions of this paper can be summarized as follows:
\begin{itemize}
\item \textbf{Generalized signal model for multi-hop RIS-assisted Communication}. We construct a system model that considers the propagation of signals between the base station (BS) and user equipment (UE) through successive multi-hop RIS configurations. The multiple RIS-reflection channel components among the propagation are categorized into two classes. We characterize the relationship between the input and output electric fields among RISs within each class and integrate them to establish a general multi-hop signal model. With this model, the electric field at each point can be expressed analytically.
\item \textbf{Derivation of the closed-form expression for the beamwidth of RIS-reflected beams}. As a reflectarray antenna, we derive the array factor incorporating the beam direction of RIS reflected. Based on that, we further present the closed-form expression for the half-power beamwidth (HPBW) of the reflecting beams, which closely correlates with the number of units (or aperture size) of RIS and the reflecting beam direction.
\item \textbf{Optimization strategy on RIS deployment and beam design}. Leveraging the derived beamwidth expressions and signal model, we propose optimization methods for RIS deployment and beam design, which eliminates the need for CSI estimation and prioritize high aperture efficiency. The derived results and proposed optimization strategies are validated through simulation and prototype experiments. In the prototype test, we demonstrate that employing optimized multi-beam can increase signal power gain by approximately \textbf{10 dB} compared to widely used single-beam approaches.
\end{itemize}

\subsection{Outline}
The remainder of the paper is organized as follows. In Section~\ref{Section2}, we present the modeling of successive RIS-assisted multi-hop communications. Section~\ref{Section3} analyzes the radiation of RIS, mainly on the aperture efficiency and array factor. Further, the closed-form expression of beamwidth for RIS-reflected beam is derived. In Section~\ref{Section4}, we proposed the deployment and beam optimization strategies for enhanced multi-hop RIS-assisted communication. Section~\ref{Section5} is dedicated to the performance evaluations of the derived results and proposed methods through numerical and prototype experiments. Finally, this paper is concluded in Section~\ref{Section6}.


\subsection{Notations}
Notations in this paper are defined as follows. The imaginary unit is indicated by $j$. The magnitude and complex components of a complex number are represented by $|\cdot|$ and $\mathcal{\Im}(\cdot)$, respectively. ${\rm arg}(\cdot)$ stands for the angle function. Unless explicitly specified, lower and upper case bold letters denote vectors and matrices. The conjugate transpose and transpose of~$\mathbf{A}$ are represented as $\mathbf{A}^H$ and $\mathbf{A}^T$, respectively. 

\section{System Model}\label{Section2}

In this section, we consider the scenario of the multi-hop RIS-assisted downlink communications, as depicted in Fig.~\ref{Scenarios}. For modeling, we categorize the propagations in multi-hop communications into two types: BS-RIS-UE, where the transmitter is the BS and considered a point source; and RIS-RIS-UE, where the signal source is the reflected RIS. These two types of signal models can be linked through the relationship between the incident field and the reflected field on RISs, thereby we can establish a general model for multi-hop RIS-assisted communication.

\subsection{BS-RIS-UE}

\begin{figure}
  \centering
  \includegraphics[width=.8\linewidth]{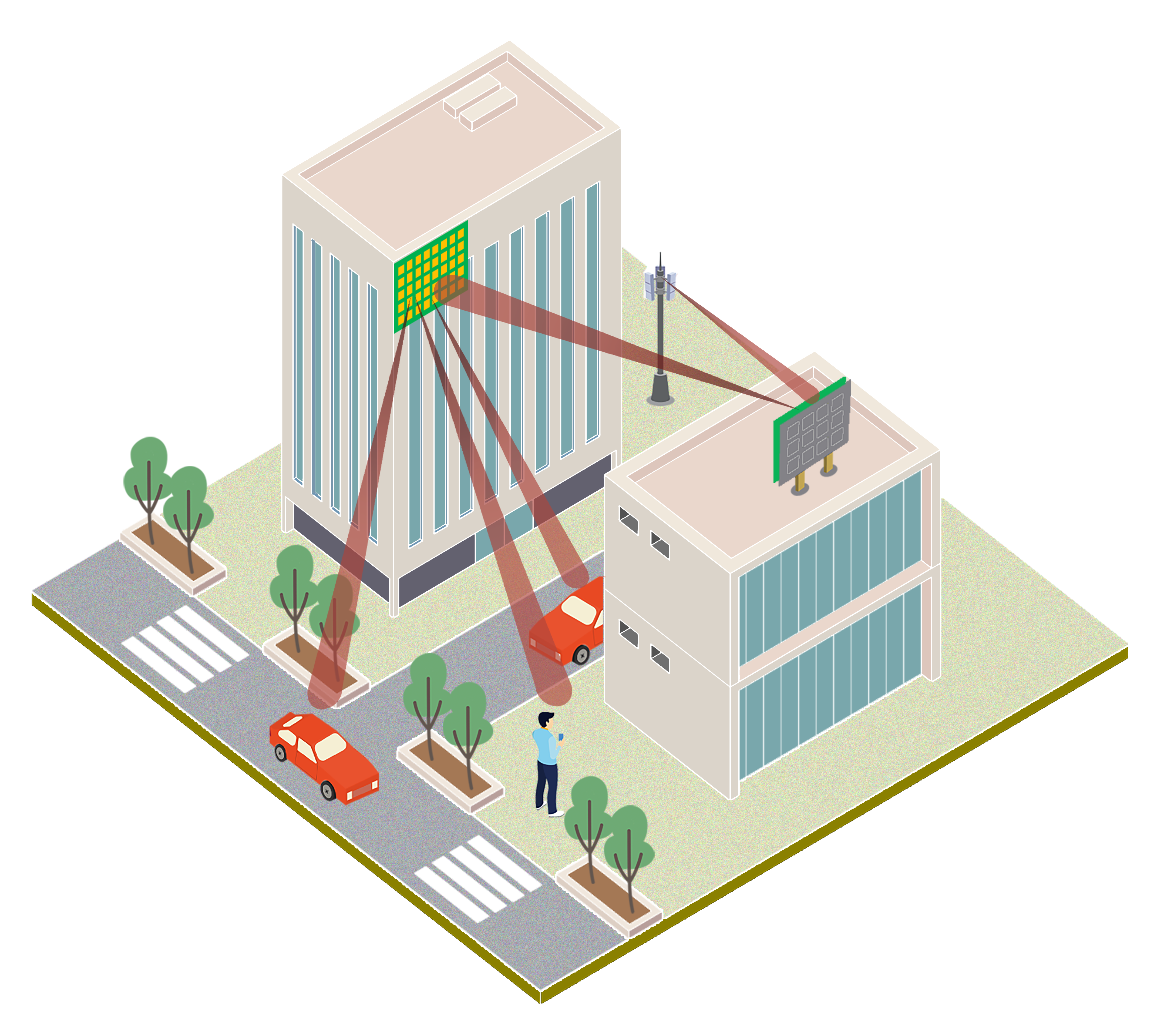}
  \caption{RIS-assistanted multi-hop wireless communications.}
  \label{Scenarios}
\end{figure}

For our analysis, we first concentrate on an arbitrary RIS situated on the $xoy$-plane, comprising multiple units located at $\mathbf{p}_n = [x_n, \ y_n, \ z_n ]^T$, $n=1, \ldots, N$. The RIS is illuminated by a single incident electromagnetic wave (EM) wave originating from a point source location $(r^\text{i}, \theta^\text{i}, \phi^\text{i})$. For clarity, $r^\text{i}$ denotes the distance from the source to the anchor of the RIS, such as the geometrical center for a planar array or the left end for a uniform linear array, while $r^\text{i} (n)$ represents the distance from the source to the $n$-th unit at $\mathbf{p}_n$, $ (\theta^\text{i}, \phi^\text{i})$ denotes the elevation and azimuth angles, as shown in Fig.~\ref{unit}

\begin{figure}
  \centering
  \includegraphics[width=1\linewidth]{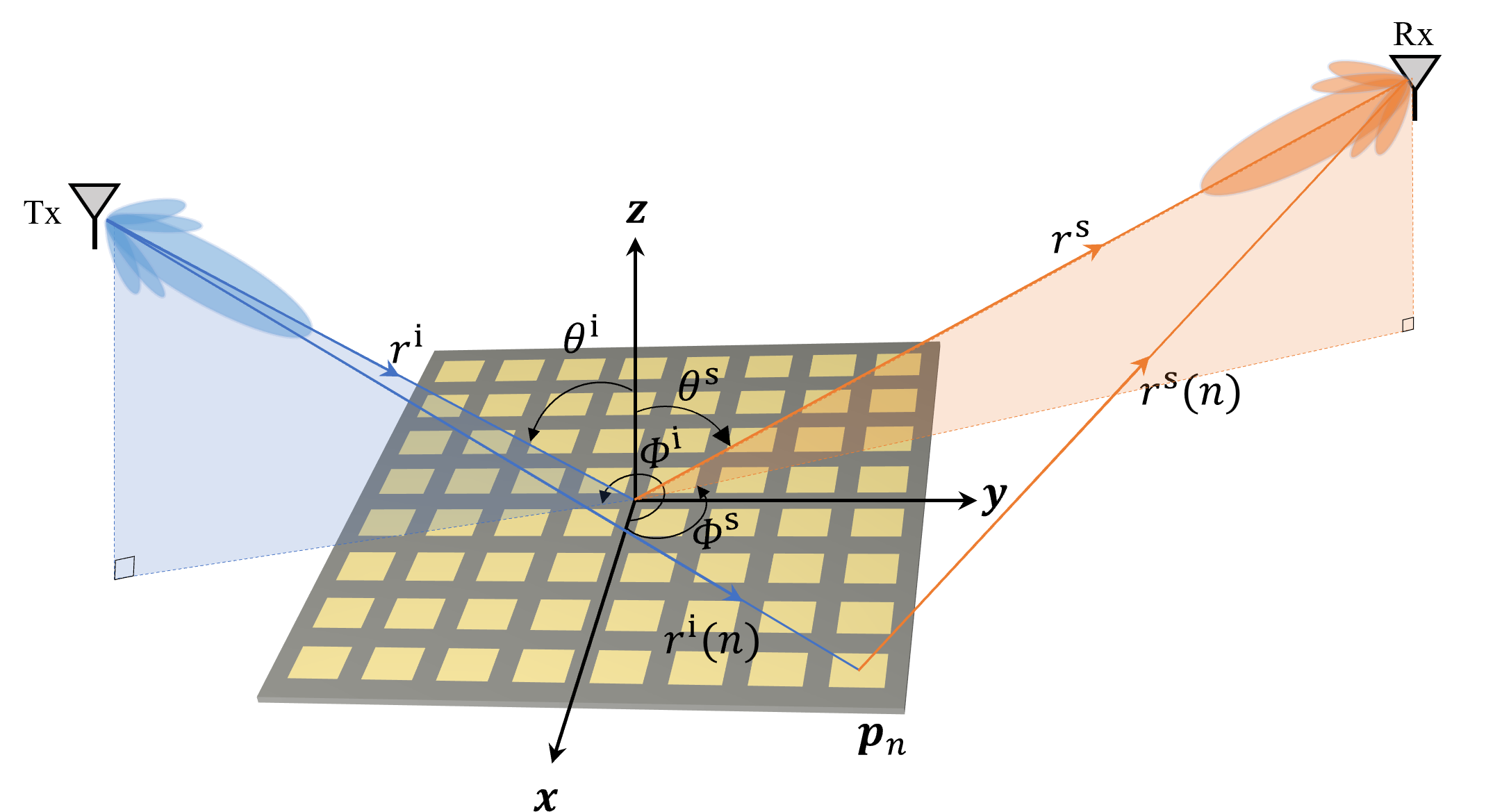}
  \caption{The RIS is illuminated by a single incident EM wave from ($ r^\text{i}, \theta^\text{i}, \phi^\text{i}$).}
  \label{unit}
\end{figure}

In the case of a point source, radiation propagates radially. As the EM wave propagates toward the unit at $\mathbf{p}_n$, the attenuation behavior is characterized by the factor $e^{ - j 2 \pi r^\text{i} (n) / \lambda}/ r^\text{i} (n)$. Suppose the original electric field from the point source is $E^\text{i} ( r^\text{i}, \theta^\text{i}, \phi^\text{i} )$. The incident electric field at $\mathbf{p}_n$ could be represented as 
\begin{equation}
E^\text{i} (n) = E^\text{i} ( r^\text{i}, \theta^\text{i}, \phi^\text{i} ) e^{ - j 2 \pi r^\text{i} (n) / \lambda}/ r^\text{i} (n). 
\end{equation}

Similarly, if $E^\text{s} (n)$ denotes the scattered electric field of the unit at $\mathbf{p}_n$, then the electric field at the observation point is $E^\text{s} (n) e^{ - j 2 \pi r^\text{s} (n) / \lambda } / r^\text{s} (n)$. 

As indicated in our previous study~\cite{xiong2023fairbeamallocationsreconfigurable}, the electric field at the observation point along the propagation path through the $n$-th unit at $\mathbf{p}_n$ can be obtained by combining the above components, as
\begin{multline}\label{SISOBehavior}
  E^\text{s}_n ( r^\text{s}_n, \theta^\text{s}, \phi^\text{s} ) = \tau ( \theta^\text{s} (n), \phi^\text{s} (n) ; \theta^\text{i} (n), \phi^\text{i} (n) ) e^{j \omega_n} \\
  \frac{ e^{ - j 2 \pi r^\text{i} (n) / \lambda} e^{ - j 2 \pi r^\text{s} (n) / \lambda } }{ r^\text{i} (n) r^\text{s} (n) } E^\text{i} ( r^\text{i}, \theta^\text{i}, \phi^\text{i} ) . 
\end{multline}
Here, $\tau(\theta^\text{s}, \phi^\text{s}; \theta^\text{i}, \phi^\text{i})$ is the scattering pattern of an isolated unit, which is dependent on both incident angle $(\theta^\text{s}, \phi^\text{s})$ and scattered angle $(\theta^\text{s}, \phi^\text{s})$. $\omega_n$ represents the phase configuration of the $n$-th unit within the RIS. 

In isotropic scattering, incident EM waves uniformly scatter in all directions over the hemisphere of reflection, irrespective of the angle of incidence or observation. Consequently, the unit's scattering pattern is simplified to $\tau_n ( \theta^\text{s}, \phi^\text{s}; \theta^\text{i}, \phi^\text{i} ) = \tau$. The electric field at the observation point is determined by the superposition of individual fields scattered by $N$ units. As
\begin{multline}\label{Model_SISO}
  E^\text{s} ( r^\text{s}, \theta^\text{s}, \phi^\text{s} ) \\ = E^\text{i} ( r^\text{i}, \theta^\text{i}, \phi^\text{i} ) \sum_{n=1}^{N} \tau 
  e^{j \omega_n} \frac{ e^{ - j 2 \pi r^\text{i} (n) / \lambda} e^{ - j 2 \pi r^\text{s} (n) / \lambda } }{ r^\text{i} (n)  r^\text{s} (n) }   . 
\end{multline}
This expression can be written in matrix forms, as 
\begin{equation}\label{Model_SISO_M}
\begin{aligned}
E^\text{s} ( r^\text{s}, \theta^\text{s}, \phi^\text{s} ) & = 
\tau
 \begin{bmatrix}
            \frac{ e^{-j 2 \pi r^\text{s}(1) / \lambda }}{ r^\text{s}(1) } &   \cdots     &  \frac{ e^{-j 2 \pi r^\text{s}(N) / \lambda }}{ r^\text{s}(N) } 
      \end{bmatrix} \\
 &\begin{bmatrix}
            e^{j \omega_1} &        &  0 \\
                               & \ddots &    \\
            0                  &        & e^{j \omega_N}
      \end{bmatrix} 
       \begin{bmatrix}
         \frac{ e^{-j 2 \pi r^\text{i}(1) / \lambda }}{ r^\text{i}(1) }                                                    \\
        \vdots                                                                            \\
        \frac{ e^{-j 2 \pi r^\text{i}(N) / \lambda }}{ r^\text{i}(N) }    
      \end{bmatrix} 
        E^\text i (r^\text{i}, \theta^\text{i}, \phi^\text{i}) .
\end{aligned}
\end{equation}

This signal model characterizes the input-output behavior of the RIS, enabling us to calculate the required RIS phase configuration based on the desired electric field. In other words, for a given RIS phase configuration, one only needs to determine the elevation and azimuth angles of the observation point relative to the RIS to calculate the electric field at an observation location.

\subsection{RIS-RIS-UE}

A notable feature in multi-hop RIS setups is that, except for the first RIS, all other RIS nodes are illuminated by the reflected signal from its former RIS. This means that these illuminations are often directional and non-uniform, especially in the near-fields. In this section, we consider the field distribution on RIS to model the signal propagation.

Suppose totaly $K$ RISs are employed, the $(k-1)$-th RIS (denoted as RIS$\{k-1\}$) employs $N$ effective units reflecting the signal beam to illuminate the $k$-th RIS (denoted as RIS$\{k\}$, with $M$ units), where $k\geq 2$. For RIS$\{k\}$, the incident electric field at the $m$-th unit (coordinates $\mathbf{p}_m = [x_m, \ y_m, \ z_m ]^T$, $m=1, \ldots, M$) can be calculate through the scattered field of RIS$\{k-1\}$ at $\mathbf{p}_m$, which can be expressed as 
\begin{equation}\label{E:IS}
E^\text {i}_{k} (m) =  E^\text{s}_{k-1} (r^\text{s}_{k-1}(m), \theta^\text{s}_{k-1},\phi^\text{s}_{k-1}),
\end{equation} 
where $r^\text{s}_{k-1}(m)$ represent the distance from the $m$-th unit on RIS$\{k\}$ to the reference center point on RIS$\{k-1\}$.

Furthermore, the electric field observed at a specific point after reflection by the hop-node RIS$\{k\}$ can be written as 
\begin{equation}\label{RIS-RIS-UE}
\begin{aligned}
&E_k^{\mathrm{s}}(r^\text{s}_k, \theta^\text{s}_k, \phi^\text{s}_k)\\
& = \sum_{m=1}^M \tau  e^{j \omega_{m}} E^\text {i}_{k} (m)\frac{e^{-j 2 \pi r^\text{s}_k(m )/ \lambda}}{r^\text{s}_k(m )}\\
& = \sum_{m=1}^M \tau  e^{j \omega_{m}} E^\text{s}_{k-1} (r^\text{s}_{k-1}(m), \theta^\text{s}_{k-1},\phi^\text{s}_{k-1})
 \frac{e^{-j 2 \pi r^\text{s}_k(m )/ \lambda}}{r^\text{s}_k(m )}.
\end{aligned}
\end{equation}
It should be noted that this expression is also an iterative calculation about $E^\text{s}_{k-1} (r^\text{s}_{k-1}(m), \theta^\text{s}_{k-1},\phi^\text{s}_{k-1})$, with $2\leq k\leq K$. The observed electric field at a certain point depends on the spatial scattered field of the all $K$ RISs.

\subsection{General Model}
While $k =1$, for each $m$, $E^\text{s}_{k-1} (r^\text{s}_{k-1}(m), \theta^\text{s}_{k-1},\phi^\text{s}_{k-1})$ can be obtained through~\eqref{Model_SISO}. Thus, by combining equation \eqref{Model_SISO}, \eqref{E:IS}, and \eqref{RIS-RIS-UE}, we can construct a general cascaded model for signal transmission from the BS to a certain UE, with $K$ multi-hop RISs employed.

For simplicity and clarity, we consider a classic two-hop signal model in analysis to reveal this propagation of wireless signals along multi-hop RISs. As illustrated in Fig,~\ref{TwoHop}, the signal propagates along the BS-RIS$\{1\}$-RIS$\{2\}$-UE path, and no direct links between the BS and RIS$\{2\}$. The number of units on RIS$\{1\}$ and RIS$\{2\}$ is $N$ and $M$, respectively.
\begin{figure*}
  \centering
  \includegraphics[width=0.8\linewidth]{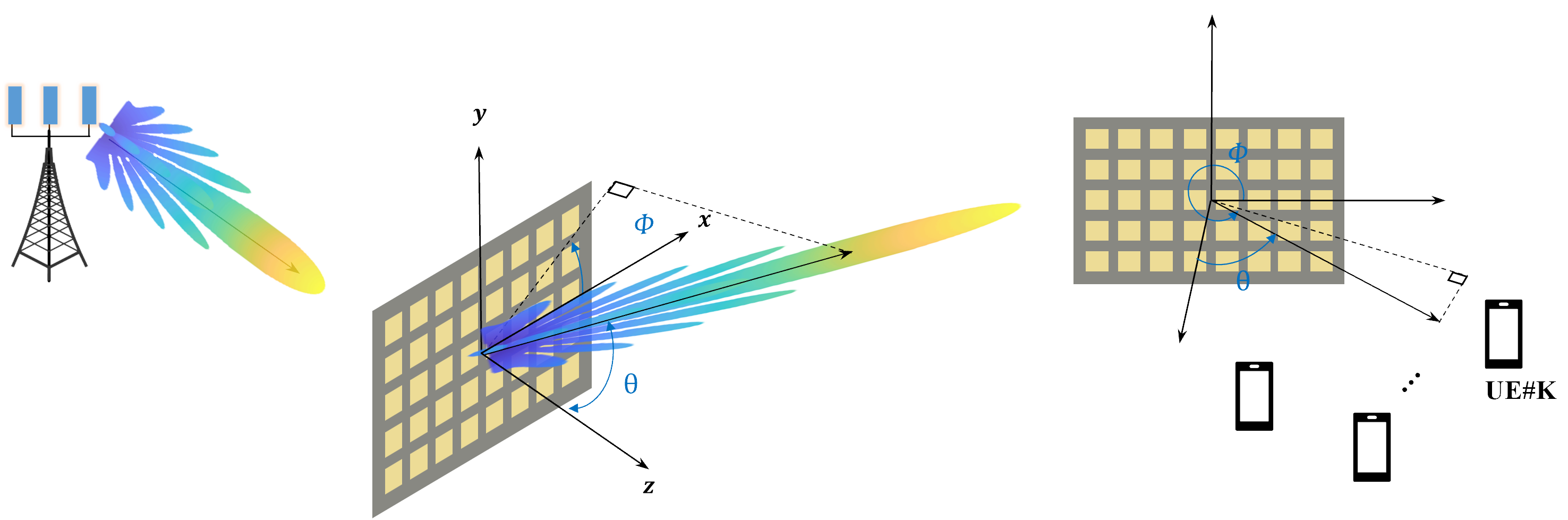}
  \caption{Two-Hop RIS-assisted wireless communications.}
  \label{TwoHop}
\end{figure*}
After two hops, the scattered field observed at the UE point can be expressed as
\begin{equation}
\begin{aligned}
  E^\text{s} (p ) &\\ 
  =&\sum_{m=1}^M \tau  e^{j \omega_m} E^\text{i}_2 (m) \frac{e^{-j 2 \pi r^\text{s}_2(m )/ \lambda}}{r^\text{s}_2(m )} \\
  =&\sum_{m=1}^M \tau  e^{j \omega_m} E^\text{s} (r^\text{s}(m), \theta^\text{s},\phi^\text{s}) \frac{e^{-j 2 \pi r^\text{s}_2(m )/ \lambda}}{r^\text{s}_2(m )} 
\end{aligned}
\end{equation}
For each $1\leq m\leq M$, $E^\text{s}_{k-1} (r^\text{s}_{k-1}(m), \theta^\text{s}_{k-1},\phi^\text{s}_{k-1})$ can be calculated through~\eqref{Model_SISO}.
Thus, the equation holds
\begin{multline}
 E^\text{s} ( p ) 
  =\sum_{m=1}^M \tau  e^{j \omega_{m}}  \frac{e^{-j 2 \pi r^\text{s}_2(m )/ \lambda}}{r^\text{s}_2(m )}E^\text{i} ( r^\text{i}, \theta^\text{i}, \phi^\text{i} )
   \sum_{n=1}^{N} \tau  e^{j \omega_{n}} \\
   \frac{ e^{ - j 2 \pi r^\text{i} (n) / \lambda} e^{ - j 2 \pi r^\text{s} (m,n) / \lambda } }{ r^\text{i} (n)  r^\text{s} (m,n) }.
\end{multline}
Here, $r^\text{s} (m,n)$ represents the distance between the $m$-th unit on RIS$\{2\}$ and the $n$-th unit on RIS$\{1\}$. 

This equation can be written in matrix forms as in \eqref{E:2Hop}. We have the canonical linear representation to describe the input/output behaviors of multi-hop RIS-assisted communications. The advantage lies in employing a simple system of linear equations to describe input/output field, making it suitable for analyzing and optimizing the performance of RIS-aided systems.

\begin{figure*}[!htbp]
  \begin{equation}\label{E:2Hop}
    \begin{aligned}    
        E^\text{s} (p) 
      = & \tau^2 
 \begin{bmatrix}
            \frac{ e^{-j 2 \pi r^\text{s}_2(1 ) / \lambda }}{ r^\text{s}_2(1 ) } &   \cdots     &  \frac{ e^{-j 2 \pi r^\text{s}_2(M ) / \lambda }}{ r^\text{s}_2(M ) } 
      \end{bmatrix} 
   \begin{bmatrix}
            e^{j \omega_1} &        &  0 \\
                               & \ddots &    \\
            0                  &        & e^{j \omega_M}
      \end{bmatrix} 
      \begin{bmatrix}
        \frac{ e^{-j 2 \pi r^\text{s}(1,1) / \lambda }}{ r^\text{s}(1,1) }      & \cdots   &  \frac{ e^{-j 2 \pi r^\text{s}(1,N) / \lambda }}{ r^\text{s}(1,N) } \\
        \vdots & \ddots & \vdots                                          \\
        \frac{ e^{-j 2 \pi r^\text{s}(M,1) / \lambda }}{ r^\text{s}(M,1) }       & \cdots & \frac{ e^{-j 2 \pi r^\text{s}(M,N) / \lambda }}{ r^\text{s}(M,N) }  \\
      \end{bmatrix} 
      \begin{bmatrix}
            e^{j \omega_1} &        &  0 \\
                               & \ddots &    \\
            0                  &        & e^{j \omega_N}
      \end{bmatrix} \\
      & \begin{bmatrix}
         \frac{ e^{-j 2 \pi r^\text{i}(1) / \lambda }}{ r^\text{i}(1) }                                                    \\
        \vdots                                                                            \\
        \frac{ e^{-j 2 \pi r^\text{i}(N) / \lambda }}{ r^\text{i}(N) }    
      \end{bmatrix} 
        E^\text i (r^\text{i}, \theta^\text{i}, \phi^\text{i}) .
    \end{aligned}
  \end{equation}
  \medskip
  \hrule
\end{figure*}


\section{Radiation Analysis of the RIS}\label{Section3}

One of the primary objectives of successive multi-hop RIS is to enhance the received signal power at the UEs. Traditional multi-hop communications have almost overlooked the transmission efficiency between RISs, leading to exponential signal energy loss during the multi-hop process. In this section, we take into account the aperture efficiency and half-power beamwidth (HPBW) analysis for the design and optimization of multi-hop systems, which are demonstrated to be crucial for signal improvement.
\subsection{Aperture Efficiency}

We first consider the aperture efficiency, a parameter that measures the ability of an antenna's physical aperture to convert available input power into radiated, significantly impacting the overall performance and signal strength of the antenna. In aperture-type antennas such as RISs, the physical aperture area $A_{\text p}$ determines the theoretical maximum directivity of the antenna, which is 
\begin{equation}
D_{\text max} =  \frac{4 \pi}{\lambda^2} A_{\text p}.
\end{equation}
The achievable gain of RIS can be computed as 
\begin{equation}
G=\varepsilon_{\mathrm{ap}} D_{\text max}=\varepsilon_{\mathrm{ap}} \frac{4 \pi}{\lambda^2} A_{\text p},
\end{equation}

where  $\varepsilon_{\text {ap }}$  is the aperture efficiency of the array and $ 0 \leq \varepsilon_{\text {ap }} \leq 1$, $A_{\text p}$ is the physical aperture area of RIS. Since the wavelength and physical aperture area are easily determined, the study of gain reduces to an analysis of aperture efficiency, which can be expressed as a product of sub-efficiencies.~\cite{stutzman2012antenna}:
\begin{equation}
\varepsilon_{\text {ap }} = e_{\text r}\varepsilon_{\text {s }}.
\end{equation}
Here
$e_{\text r}=$ illumination efficiency, $\varepsilon_{\text {s }}=$ spillover efficiency.

Regarding a RIS comprising $N$ units with $A$ rows and $B$ columns, the radiation efficiency $e_{\text r}$ is associated with the electric field distribution that the feed source irradiates onto the reflecting surface (RIS), which can be calculated through the following equation~\cite{nayeri2018reflectarray,zhang2005radar}.
\begin{equation}\label{Re}
e_{\text r}=\frac{\left|\sum_{a=1}^A \sum_{b=1}^B \left|E_{a b}\right|\right|^2}{A B \sum_{a=1}^A \sum_{b=1}^B\left|E_{a b}\right|^2},
\end{equation}

It can be observed that $e_{\text r}$ is maximized when the electric field $E_{ab}$ at each unit on the RIS is equal. In other words, the aperture field on the RIS is uniformly distributed.

Spillover efficiency $\varepsilon_{\text {s }}$ measures that portion of the feed pattern that is intercepted by the effective aperture of the RIS plate (and redirected through the aperture into the main beam) relative to the total energy\cite{stutzman2012antenna}. As

\begin{equation}\label{Se}
\varepsilon_s=\frac{\int_0^{2 \pi} \int_0^{\theta_0}\left|F_t\left(\theta_t, \phi_t\right)\right|^2 \sin \theta_t d \theta_t d \phi_t}{\int_0^{2 \pi} \int_0^{\pi/2}\left|F_t\left(\theta_t, \phi_t\right)\right|^2 \sin \theta_t d \theta_t d \phi_t},
\end{equation}
where $\theta_0$ is the angular aperture of the reflector (angle between the line from the feed source to the edge of the antenna aperture and the central axis of the antenna aperture), $F_t\left(\theta_t, \phi_t\right)$ represents the radiation pattern function of the feed source.
The spillover efficiency $\varepsilon_s$ is maximized when the numerator $\int_0^{2 \pi} \int_0^{\theta_0}\left|F_t\right|^2 \sin \theta_t d \theta_t d \phi_t$ and denominator $\int_0^{2 \pi} \int_0^{\pi/2}\left|F_t\right|^2 \sin \theta_t d \theta_t d \phi_t$ are equal, that is when the radiation beam illuminate solely on the RIS board.

\begin{figure}[H]\label{ApertureEfficiency}
  \centering
  \subfigure[]{
  \label{AE}
  \includegraphics[width=0.475\linewidth]{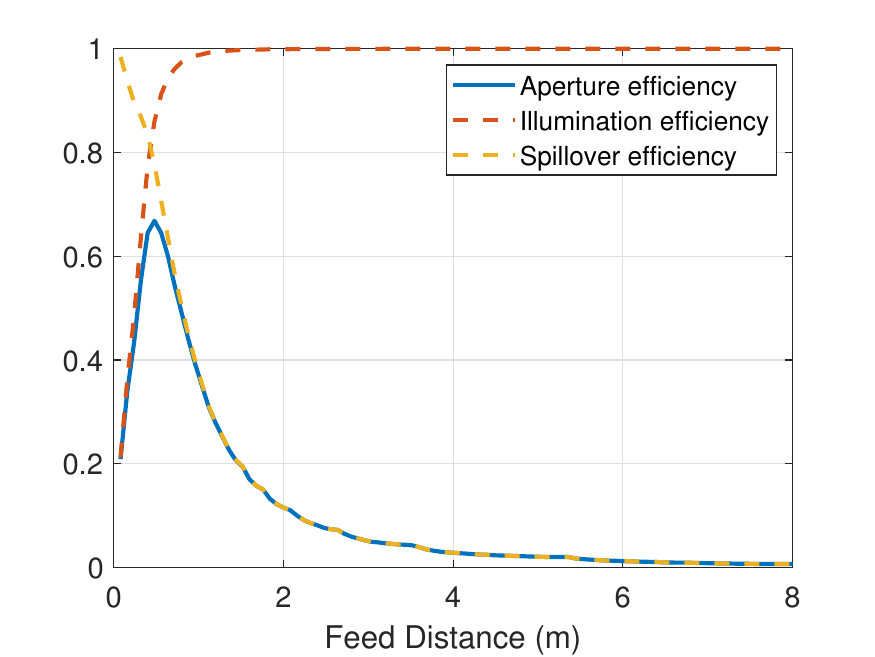}}
  \subfigure[]{
  \label{DG}
  \includegraphics[width=.475\linewidth]{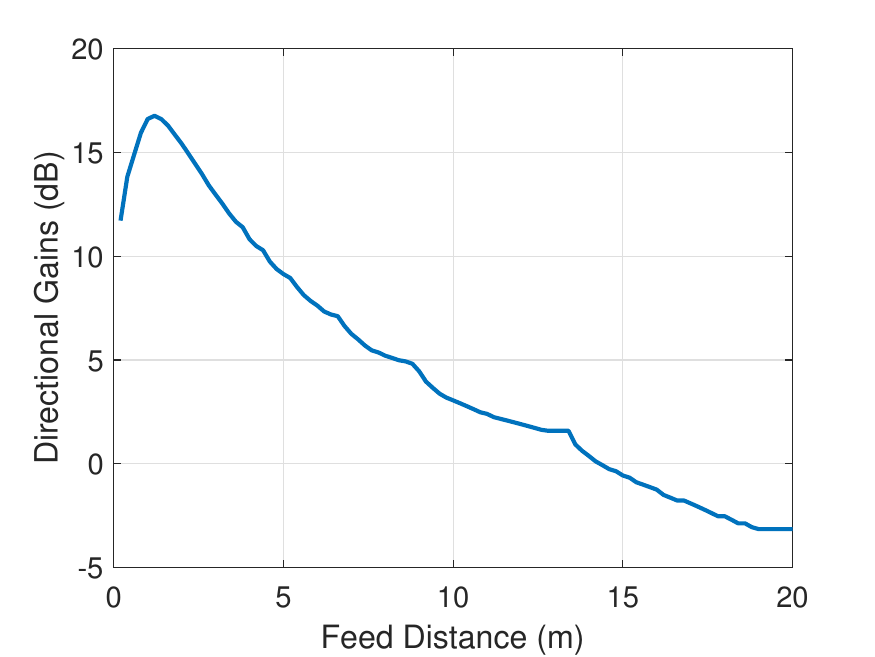}}
  \caption{Aperture efficiency and directional gains as a function of the distance between the feed source and the RIS.}
\end{figure}
\begin{figure*}[t]
  \centering
  \subfigure[]{
  \label{F2-1}
  \includegraphics[width=0.3\linewidth]{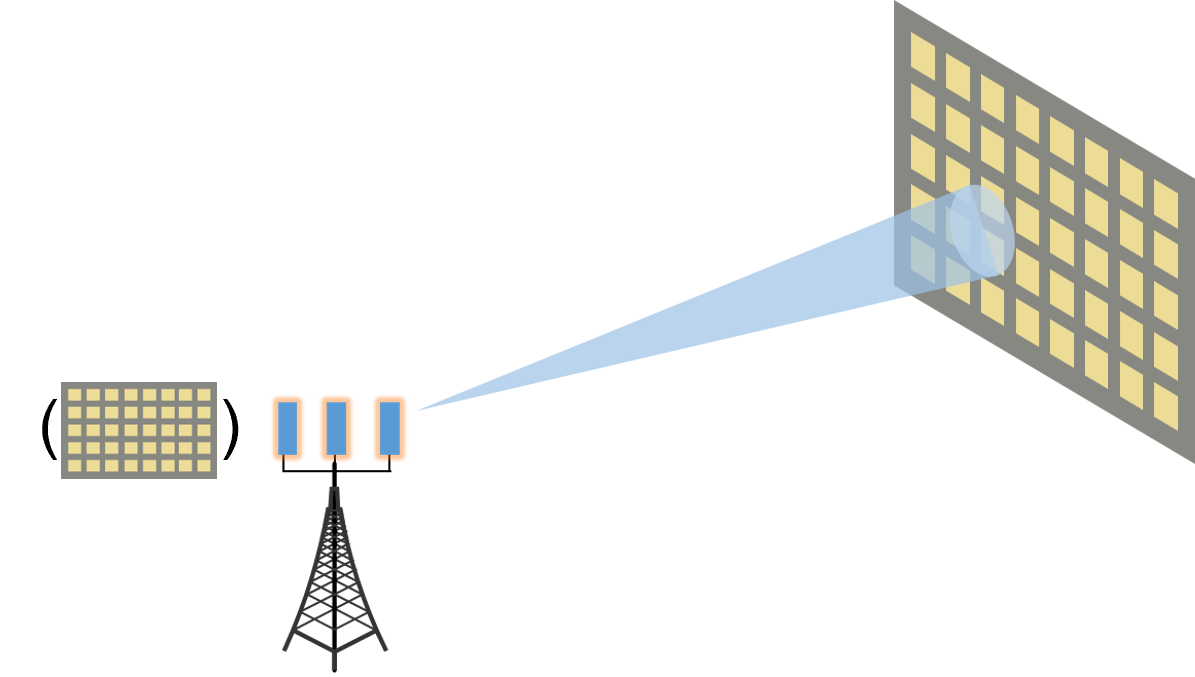}}
  \subfigure[]{
  \label{F2-2}
  \includegraphics[width=.3\linewidth]{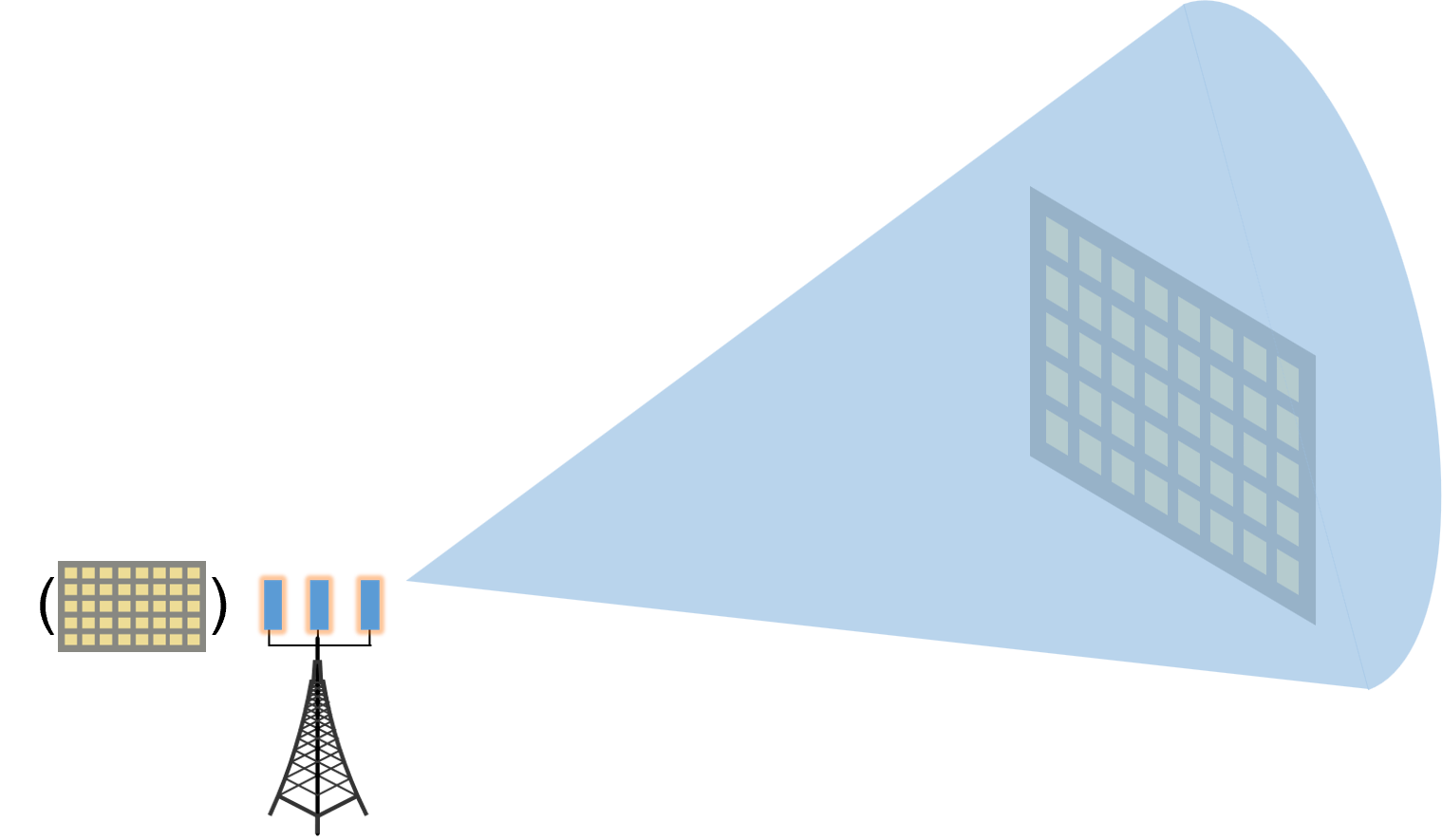}}
  \subfigure[]{
  \label{F2-3}
  \includegraphics[width=.3\linewidth]{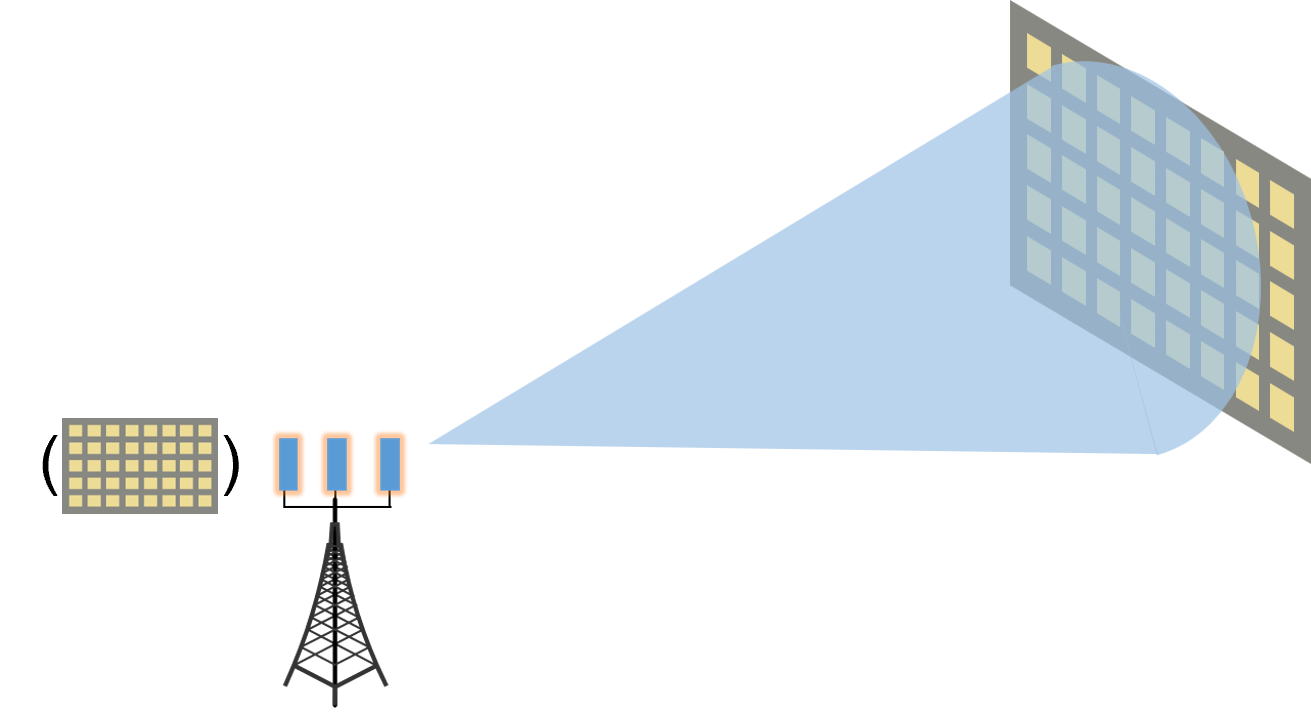}}
  \caption{The irradiation of the transmitting/reflecting beam. (a) Partial illumination. (b) Over illumination. (c) Exact illumination.}
  \label{ToRIS}
\end{figure*}

Recall that $G$ represents the directional gain achievable by the RIS acting as an aperture antenna. It's maximum is achieved when the aperture efficiency $\varepsilon_{\mathrm{ap}}$ peaks. Fig.\ref{ApertureEfficiency} illustrates the relationship between the distance from the feed source to the RIS and the efficiency and directional gain when a horn antenna is used as the feed source. It can be observed that as the distance increases, the illumination efficiency gradually increases, while the spillover efficiency gradually decreases. Meanwhile, the aperture efficiency, which is the product of these two factors, and the directionally correlated gain exhibit a trend of initially increasing and then decreasing.

Based on the equation\eqref{Re}, \eqref{Se}, and the antenna radiation theory~\cite{balanis2016antenna,tang2020wireless}, in communication links where RIS serves as a reflective feed source, one crucial approach to maximizing antenna gain $G$ can be designing a \textbf{precise beam} that \textbf{uniformly covers} the RIS surface area. As illustrated in Fig.~\ref{F2-3}.

\subsection{The Half-Power Beamwidth}

To quantify the signal coverage performance, the half-power beamwidth (HPBW) is one of the most important metrics~\cite{balanis2016antenna}. HPBW is conventionally defined as the beamwidth within which the signals's normalized radiated power is higher than -3 dB in a narrowband system. In this paper, we consider that there is a uniform electric field distribution within the 3-dB beamwidth coverage area.

With ideal aperture efficiency, the RIS can be modeled as a uniform amplitude and spacing array. The array factor can be derived by treating the units to be point sources. According to the pattern multiplication rule described in antenna theory~\cite{balanis2016antenna}, for an array composed of identical units, the total field can be synthesized by multiplying the array factor of the isotropic sources by the field of a single unit.

Assuming the position vector as follows,

\begin{enumerate}[\label={}]
    \item the transmitter (Tx):
    $$\mathbf{r}_1= \left[\cos \phi_1 \sin \theta_1,\sin \phi_1 \sin \theta_1, \cos \theta_1\right]^T,$$
    \item the receiver (Rx): $$\mathbf{r}_2=\left[\cos \phi_2 \sin \theta_2,\sin \phi_2 \sin \theta_2,\cos \theta_2\right]^T,$$
    \item the observation direction: $$\mathbf{r}=[\cos \phi \sin \theta,\sin \phi \sin \theta,\cos \theta]^T,$$
    \item the unit in the $a$-th row and $b$-th column: $$\mathbf{r}_{a b}=[(b-1) d,(a-1) d, 0]^T.$$
\end{enumerate}

Denote the phase of the ($a,b$)-unit on RIS as $\phi_{a b}$, as the main beam direction is achieved at ($\theta_2, \phi_2$), the phase configuration is
\begin{equation}
\phi_{a b}=-k\left(\mathbf{r}_1 * \mathbf{r}_{a b}^T + \mathbf{r}_2 * \mathbf{r}_{a b}^T\right),
\end{equation}
where $k = 2\pi/\lambda$ is the wave number.
The array factor of RIS can be given by
\begin{equation}\label{AF}
\mathrm{AF}(\theta, \phi)=\sum_{a=1}^A \sum_{b=1}^B  e^{j k\left(\mathbf{r} * \mathbf{r}_{ab}^T-\mathbf{r}_2 * \mathbf{r}_{ab}^T\right)},
\end{equation}
 which can also be rewritten in~\eqref{AF0}.
 \begin{figure*}
 \begin{equation}\label{AF0}
 \begin{aligned}
\mathrm{AF}(\theta, \phi)&= \sum_{a=1}^A \sum_{b=1}^B e^{j kd\left[((b-1) \cos \phi \sin \theta+(a-1) \sin \phi \sin \theta)-\left((b-1) \cos \phi_2 \sin \theta_2+(a-1) \sin \phi_2 \sin \theta_2\right)\right]} \\
&= \sum_{a=1}^A e^{j kd(a-1)\left(\sin \phi \sin \theta-\sin \phi_2 \sin \theta_2\right)} * \sum_{b=1}^B e^{j kd(b-1)\left(\cos \phi \sin \theta-\cos \phi_2 \sin \theta_2\right)}
\end{aligned}
\end{equation}
\end{figure*}
 
Denote $\Psi_1=kd\left(\sin \phi \sin \theta-\sin \phi_2 \sin \theta_2\right)$, and $\Psi_2=kd\left(\cos \phi \sin \theta-\cos \phi_2 \sin \theta_2\right)$, it holds
\begin{equation}\label{AF-11}
\mathrm{AF}(\theta, \phi)= \frac{\sin \left(\frac{A}{2} \Psi_1\right)}{\sin \left(\frac{1}{2} \Psi_1\right)} \frac{\sin \left(\frac{B}{2} \Psi_2\right)}{\sin \left(\frac{1}{2} \Psi_2\right)} e^{j\left(\frac{A-1}{2} \Psi_1+\frac{B-1}{2} \Psi_2\right)}
\end{equation}

If the reference point is the physical center of the array, the array factor of~\eqref{AF-11} reduces to
\begin{equation}\label{AF-12}
\mathrm{AF}(\theta, \phi)= \frac{\sin \left(\frac{A}{2} \Psi_1\right)}{\sin \left(\frac{1}{2} \Psi_1\right)} \frac{\sin \left(\frac{B}{2} \Psi_2\right)}{\sin \left(\frac{1}{2} \Psi_2\right)}. 
\end{equation}
It can be normalized as 
\begin{equation}\label{AF-12}
\mathrm{AF}(\theta, \phi)= \frac{1}{A}\frac{1}{B}\frac{\sin \left(\frac{A}{2} \Psi_1\right)}{\sin \left(\frac{1}{2} \Psi_1\right)} \frac{\sin \left(\frac{B}{2} \Psi_2\right)}{\sin \left(\frac{1}{2} \Psi_2\right)}. 
\end{equation}

The above expression can be approximated by 
\begin{equation}\label{Approxi}
\mathrm{AF}(\theta, \phi) \simeq\left(\frac{\sin \left(\frac{A}{2} \Psi_1\right)}{\frac{A}{2} \Psi_1} \frac{\sin \left(\frac{B}{2} \Psi_2\right)}{\frac{B}{2} \Psi_2}\right)
\end{equation}

With the array factor, we proceed to investigate the HPBW of the directional main beam. Considering the rotational symmetry of the beam, we take the plane containing the azimuth angle of the beam direction as the reference plane, i.e., $\phi=\phi_2$.

Substitute $\phi=\phi_2$ into $\Psi_1$ and $\Psi_2$, we hold the equations
\begin{equation}
\begin{aligned}
\Psi_1&=kd\sin \phi_2 (\sin \theta-\sin \theta_2),\\
\Psi_2&= kd  \cos \phi_2 (\sin \theta-\sin \theta_2).
\end{aligned}
\end{equation}

Denote $\Psi_3 = \frac{A}{2}\Psi_1$, and $\Psi_4 = \frac{B}{2}\Psi_2$, the factor array can be written as 
\begin{equation}
\mathrm{AF}(\theta, \phi) \simeq \frac{\sin \left(\Psi_3\right)}{\Psi_3} \frac{\sin \left(\Psi_4\right)}{\Psi_4}
\end{equation}

For the main beam in the plane at any azimuth angle $\phi_2$, the HPBW depends on the elevation angle $\theta$. To simplify and clarify, we illustrate this with $\phi_2=0$, where it holds
\begin{equation}\label{AF PLANAR}
\mathrm{AF}(\theta, \phi) \simeq  \frac{\sin \left(\Psi_4\right)}{\Psi_4}.
\end{equation}
The HPBW can be obtained through
\begin{equation}\label{HP}
\mathrm{HP} = |\theta^{*}_1-\theta^{*}_{2}|,
\end{equation}
where
\begin{equation}\label{E:3dB} 
\begin{aligned}
\theta^{*}_1&=\sin ^{-1}\left(  1.391 \times \frac{2}{B kd}+\sin \theta_{2}\right),\\
\theta^{*}_2&=\sin ^{-1}\left(  -1.391 \times \frac{2}{B kd}+\sin \theta_{2}\right).
\end{aligned}
\end{equation}
\begin{thm}\label{thm:error_est}
Given a reflecting angle $\theta_2$ of RIS, the half-power beamwidth can be obtained through $\mathrm{HP} = |\theta^{*}_1-\theta^{*}_{2}|$, where $\theta^{*}_1=\sin ^{-1}\left(  1.391 \times \frac{2}{B kd}+\sin \theta_{2}\right)$, and $\theta^{*}_2=\sin ^{-1}\left(  -1.391 \times \frac{2}{B kd}+\sin \theta_{2}\right)$.
\end{thm}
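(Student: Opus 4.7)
The plan is to derive the HPBW directly from the approximated and normalized array factor already established in the excerpt, namely $\mathrm{AF}(\theta,\phi)\simeq\frac{\sin\Psi_3}{\Psi_3}\frac{\sin\Psi_4}{\Psi_4}$. Because the main beam is rotationally symmetric about its pointing direction, I would first argue that it suffices to evaluate the HPBW inside a single cut through the beam. Choosing the azimuthal plane $\phi=\phi_2$ causes $\sin\theta-\sin\theta_2$ to be the only remaining variable in both $\Psi_1$ and $\Psi_2$, and specializing further to $\phi_2=0$ kills the $\Psi_1$-factor (since $\Psi_1\propto\sin\phi_2$), collapsing $\mathrm{AF}$ to the one-dimensional sinc $\frac{\sin\Psi_4}{\Psi_4}$ with $\Psi_4=\tfrac{Bkd}{2}(\sin\theta-\sin\theta_2)$. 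The general $\phi_2$ case then follows by the same argument applied in the appropriate principal cut, so no generality is lost.

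Second, I would invoke the definition of HPBW: the angular interval over which the normalized radiated power exceeds $-3$ dB, equivalently where $\bigl|\mathrm{AF}(\theta,\phi)\bigr|^{2}\ge\tfrac{1}{2}$. This reduces to the transcendental equation $\left|\frac{\sin\Psi_4}{\Psi_4}\right|=\frac{1}{\sqrt{2}}$, whose two solutions symmetric about the origin are the well-known sinc half-power abscissae $\Psi_4\approx\pm 1.391$. I would cite this standard numerical value rather than re-derive it, since it is the universal half-power constant of the sinc pattern.

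Third, substituting $\Psi_4=\pm 1.391$ into the definition $\Psi_4=\tfrac{Bkd}{2}(\sin\theta-\sin\theta_2)$ and solving for $\theta$ gives
\begin{equation*}
\sin\theta^{*}_{1,2}=\pm\,1.391\times\frac{2}{Bkd}+\sin\theta_{2},
\end{equation*}
and applying $\sin^{-1}$ recovers the closed-form expressions for $\theta^{*}_{1}$ and $\theta^{*}_{2}$ stated in the theorem. The HPBW is then the length of the angular interval between these two half-power angles, $\mathrm{HP}=|\theta^{*}_{1}-\theta^{*}_{2}|$, completing the argument.

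The main obstacle I anticipate is justifying that the approximations used upstream (replacing $\sin(x)$ by $x$ in the denominator of the normalized array factor to obtain the sinc form, and assuming rotational symmetry of the main lobe about the pointing direction) do not meaningfully distort the $-3$ dB level. For the denominator approximation this is standard for $d\lesssim\lambda/2$ and beams not too far from broadside, since $\sin(\Psi/2)\approx\Psi/2$ holds in the small-argument regime that governs the main lobe; for very oblique $\theta_2$ the $\sin^{-1}$ in the final expression already captures the beam-broadening effect (scan loss), so the approximation remains consistent within the main-lobe neighbourhood where the HPBW is measured. Apart from this, the remainder of the argument is a direct substitution and inversion and should not require additional technical input.
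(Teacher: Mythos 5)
Your proposal is correct and follows essentially the same route as the paper: reduce the normalized array factor to the one-dimensional sinc $\sin\Psi_4/\Psi_4$ in the $\phi=\phi_2$ cut, set $\Psi_4=\pm 1.391$ at the half-power level, and invert for $\theta^{*}_{1,2}$. Your added caveat about the sinc approximation degrading at large $\theta_2$ matches the remark the paper makes immediately after its proof.
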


\begin{proof}
According to the AF approximation in~\eqref{AF PLANAR}, the 3-dB point for the array occurs when 
\begin{equation*}
\Psi_4=
\frac{B}{2} k d\left(\sin \theta^*-\sin \theta_{2}\right)= \pm 1.391,
\end{equation*}
we obtain
\begin{equation*}
\theta^{*}=\sin ^{-1}\left( \pm 1.391 \times \frac{2}{B kd}+\sin \theta_{2}\right).
\end{equation*}
Then 
\begin{equation*}
\mathrm{HP} = |\theta^{*}_1-\theta^{*}_{2}|.
\end{equation*}
\end{proof}
It is noteworthy that when the reflecting elevation angle is relatively large, the AF approximation function~\eqref{Approxi} tends to underestimate the actual value, resulting in a computed HPBW that is smaller than the actual one.

While the main beam direction is arbitrary, the HPBW can also be obtained by constructing and solving an equation through the 3-dB value point (see Appendix~\ref{AppB}).

For an $N$-unit uniformly distributed linear array with spacing $d$.
The array factor expression in \eqref{AF} can be reduced as 
\begin{equation}
\begin{aligned}
\mathrm{AF}(\theta)=\sum_{n=1}^N e^{j(n-1)k d (\sin \theta-\sin \theta_2)}
\end{aligned}
\end{equation}

which can be normalized and approximated as (see Appendix~\ref{AppA})
\begin{equation}
\mathrm{AF}(\theta) \simeq\left[ \frac{\sin \left(\frac{N}{2} \Psi_1\right)}{\frac{N}{2} \Psi_1}\right]
\end{equation}
where $\Psi_1 = k d (\sin \theta-\sin \theta_2)$.

The 3-dB point and beamwidth can also be obtained through~\eqref{HP} and~\eqref{E:3dB}.

The closed-form expressions~\eqref{HP} and ~\eqref{E:3dB} characterize the relationship between the main beam's beamwidth in a reflective array and the number of units, as well as the reflection direction, for the first time. In the subsequent section, we will optimize the deployment and beam design of RIS based on these findings.

\section{Optimization Strategy}\label{Section4}
The width of the beam reflected by the RIS is highly dependent on the location of the receiver. In this section, we optimize RIS-assisted communications from both deployment and beamforming perspectives. By leveraging the aperture efficiency theory and half-power beamwidth derived in the previous section, our objective is to enhance the performance of the multi-hop communications through the homogenization of aperture field distribution.

\begin{figure}
  \centering
  \includegraphics[width=0.9\linewidth]{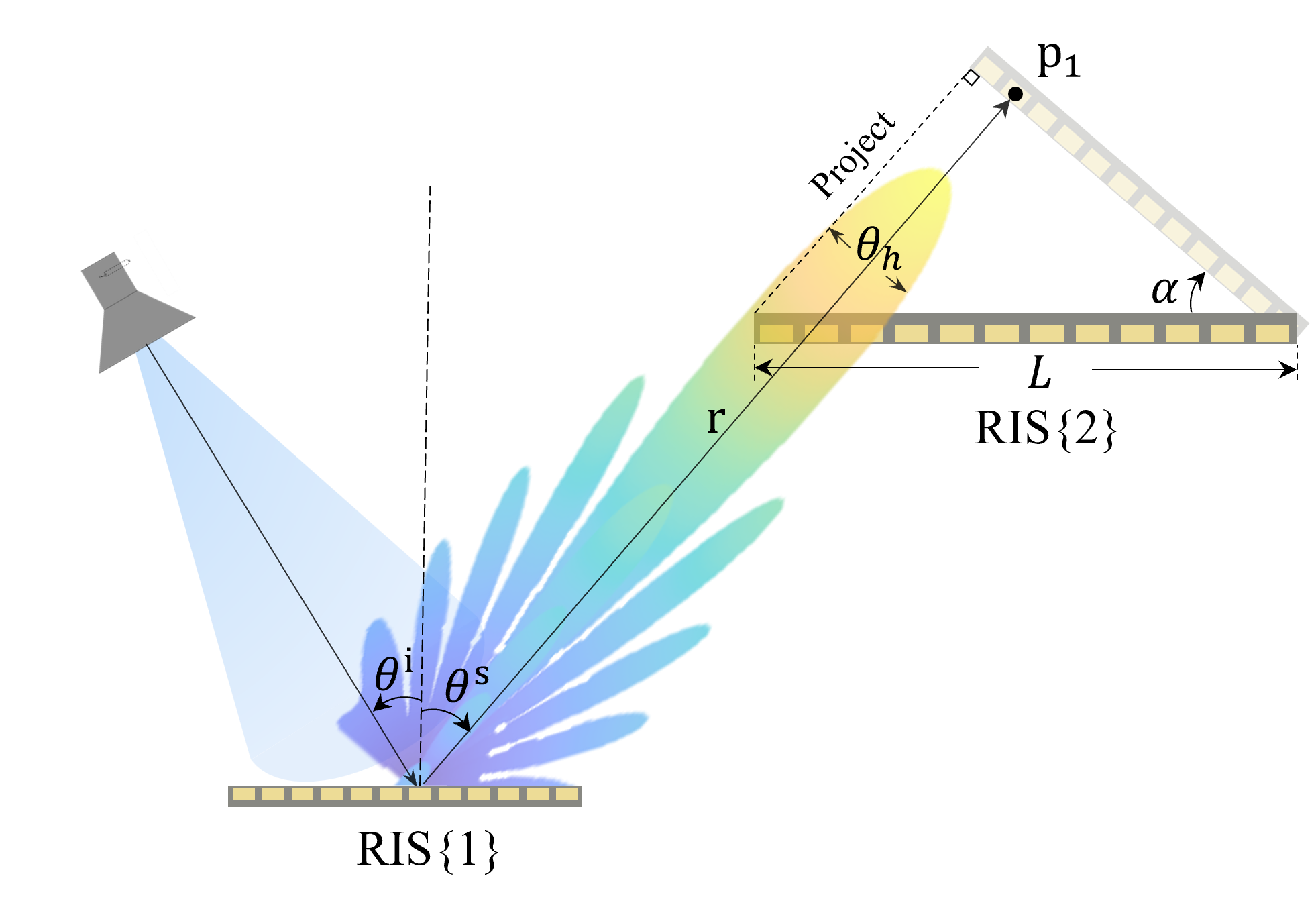}
  \caption{Beamwidth and effective aperture of RIS.}
  \label{Projection}
\end{figure}

\subsection{Deployment Optimization}

In successive multi-hop RIS-assisted communication, signals are typically point-to-point. In such single-input single-output scenarios, the phase configuration of the RIS can be directly and easily computed with the location of the transmitter and receiver~\cite{stutzman2012antenna,xiong2023fairbeamallocationsreconfigurable}.
One of the choices to maximize aperture efficiency and directional gains is to optimize the deployment of RISs according to the beamwidth.
Specifically, one can select an appropriate installation distance based on a fixed RIS aperture, or determine the required RIS aperture (the number of units) based on a fixed distance. For the sake of clarity, we employ a uniform linear array with $N$ units for analysis.

As shown in Fig.~\ref{Projection}, RIS$\{1\}$ reflecting the incident signals to RIS$\{2\}$. The effective aperture of RIS$\{2\}$ is $L\cos \alpha$.

\subsubsection{RIS position optimization}
For a fixed-aperture RIS, selecting an appropriate installation position to optimize the distance $r$ can effectively optimize the aperture field distribution and improve the aperture efficiency. This optimization can be formulated as 
\begin{equation}
\begin{aligned}
\text{(P1)} \ \min_{ r } \  |2r\sin (\frac{\theta_h}{2})-L\cos \alpha|,
\end{aligned}
\end{equation}
where $\theta_h$ is denoted as the HPBW. 
Recall that $\theta_h = |\theta^{*}_1-\theta^{*}_2|$, we have the optimum $r_{\rm opt}$ as
\begin{equation}
r_{\rm opt} = \frac{L\cos \alpha}{2\sin (\frac{|\theta^{*}_1-\theta^{*}_2|}{2})},
\end{equation}
where  
$\theta^{*}_1=\sin ^{-1}\left(  1.391\times \frac{2}{N kd}+\sin \theta^{\text s}\right)$, and 
$\theta^{*}_2=\sin ^{-1}\left(  -1.391 \times \frac{2}{N kd}+\sin \theta^{\text s}\right)$.

\subsubsection{RIS aperture optimization}
Based on the relationship between beamwidth, the number of array units, and beam direction, we can also determine the required RIS aperture size (i.e., the number of units needed in a uniform array) for signal relays at a fixed location.

Assuming the size of each unit within the uniform linear array is $\lambda/2$, the relationship between the number of units 
$N$ and the RIS aperture $L$ is given by $L = N \cdot\lambda/2$. We can formulate the optimization problem as
\begin{equation}
\text{(P2)} \ \min_{ N \in \mathbb{Z}^+ } \  |2r\sin (\frac{\theta_h}{2})-L\cos \alpha|.
\end{equation}
The optimum can be achieved by 
\begin{equation}
N_{\rm opt} = {\rm round}(  \frac{4r\cdot \sin (\frac{|\theta^{*}_1-\theta^{*}_2|}{2})}{\lambda \cdot \cos \alpha} ).
\end{equation}
Here, ${\rm round(\cdot)} $ denotes the operator of mapping to the nearest whole number,
$\theta^{*}_1=\sin ^{-1}\left(  1.391 \times \frac{2}{N kd}+\sin \theta^{\text s}\right)$, and 
$\theta^{*}_2=\sin ^{-1}\left(  -1.391 \times \frac{2}{N kd}+\sin \theta^{\text s}\right)$.

Optimizing the position and aperture size of the RIS can effectively focus the beam and ensure uniform illumination of the RIS's effective aperture, thereby achieving high aperture efficiency and directional gains. We will verify this through simulations in Section~\ref{Section5}.

\subsection{Beam Optimization}

For a large-aperture RIS, if the position is fixed, a single beam is often insufficient to cover the entire aperture area (as shown in Fig.~\ref{F2-1}), especially in the near field. It becomes necessary to design the reflecting beam through optimization methods. In this context, traditional methods often struggle to achieve the desired beam coverage and efficiency due to issues such as side-lobe interference and complex implementation requirements. 

It has been demonstrated that the phase optimization methods to achieve more beam numbers may reduce the antenna directivity but has no significant impact on the width of the beam width\cite{nayeri2018reflectarray,xiong2023fairbeamallocationsreconfigurable}. To design desired radiation patterns for RISs, we propose a modified method based on the MA algorithm, whose convergence and optimality have been proven in our previous work~\cite{xiong2023fairbeamallocationsreconfigurable}. For simplicity and clarity, we assume all the RISs have the same number of units.


We adopt the scheme to optimize phase configuration RIS by RIS. Specifically, for RIS$\{1\}$, we optimize its phase configuration to achieve uniform coverage of the reflected beam across the aperture of RIS $\{2\}$, aligning the coverage area with that of RIS $\{2\}$, based on the signal expression \eqref{Model_SISO_M}. For RIS$\{k\}$ ($k\geq 2$), the analysis approach remains consistent, transitioning to the signal model as depicted in \eqref{RIS-RIS-UE}. 

Similarly, initially taking the two-hop RIS scenario as an example, the beam design can be classified into two types. If the width of a single beam is over or exact illumination on RIS$\{2\}$, as depicted in Fig.~\ref{F2-2} and Fig.~\ref{F2-3}, we set the center of RIS $\{2\}$ as the target beam direction, and directly obtain the required phase configuration for RIS $\{1\}$ by
\begin{equation}
\omega_{n,{\rm opt}} =  2 \pi (r^\text{i} (n)+r^\text{s} (n)) / \lambda, \ n = 1,\cdots, N. 
\end{equation}
Otherwise, for partial illuminations in Fig.~\ref{F2-1}, we put forward the following optimization steps for RIS beam design, termed the multi-beam approach.

The optimization begins with determining the sampling points on RIS $\{2\}$, which will serve as the target directions for RIS $\{1\}$ to generate multiple beams, thereby achieving a uniform illumination on RIS $\{2\}$. To avoid excessive computational complexity and potential unsolvable cases caused by too many sampling points, as well as overly narrow illuminated areas due to too few sampling points, we determine the points according to the relationship between the beamwidth and RIS aperture. The number of points can be determined through
\begin{equation}\label{Sampling}
Z = 2 \times \lceil \frac{L\cos \alpha}{2r\sin (\frac{\theta_h}{2})} \rceil -1,
\end{equation}
where $\lceil \cdot \rceil$ denotes the ceiling function.

Set the $Z$ points on RIS$\{2\}$ as the hypothetic UEs, we explore a multi-beam design strategy with the objective of maximizing the minimum power. Formally, this strategy can be expressed as 
\begin{equation}\label{FB1}
  \max_{ \omega_1, \dots, \omega_{N} \in [0, 2\pi) } \min_{ z } \  \left \{ P^\text{s} ( r^\text{s}_z, \theta^\text{s}, \phi^\text{s} ), z = 1, \ldots, Z \right \}
\end{equation}
Here $P^\text{s} (\cdot) = \lvert E^\text{s} (\cdot) \rvert^2$ represents the power at a specific location.
 
Recall the equation in \eqref{Model_SISO_M}, if we define $\mathbf{w} = [e^{j \omega_1}, \cdots, e^{j \omega_N}]^H$, and $\mathbf{h}  = [h_1 , \cdots, h_N ]^T$ with $h_n = \frac{ e^{ - j 2 \pi r^\text{i} (n) / \lambda} e^{ - j 2 \pi r^\text{s} (n) / \lambda } }{ r^\text{i} (n)  r^\text{s} (n) }$, the optimization can be reformulated as 
\begin{equation}\label{FB2}
\begin{aligned}
\text{(P3)} \ \max_{ \omega_1, \dots, \omega_{N} \in [0, 2\pi) } \min_{z} \  \left \{   \mathbf{w}^H \mathbf{H}_z  \mathbf{w} , z=1, \ldots, Z \right \},
\end{aligned}
\end{equation}
where $\mathbf{H} = \mathbf{h}  \mathbf{h}  ^H$.

For RIS$\{k\}$ with $2\leq k\leq K-1$, this optimization problem can also be formulated, with the distinction that $h_n = E^\text{s}_{k-1} (r^\text{s}_{k-1}(n), \theta^\text{s}_{k-1},\phi^\text{s}_{k-1}) \frac{e^{-j 2 \pi r^\text{s}_k(n )/ \lambda}}{r^\text{s}_k(n )}$, as in \eqref{RIS-RIS-UE}. 

Problem (P3) belongs to the class of semi-infinite max-min problems, which can be optimally solved by our previously proposed MA algorithm~\cite{xiong2023fairbeamallocationsreconfigurable}. 
\begin{algorithm}
\caption{multi-beam approach for RIS phase configuration}
\label{alg-1.0}
\begin{algorithmic}[1]
\Procedure {OP}{optimization on RIS phases for $k \in[1,K]$}
	\If {$k \leq K-1$}
	\State calculate the samp point $Z$ on RIS$\{k+1\}$ by \eqref{Sampling}
	\State solve \eqref{FB2} for $\mathbf{w}_{\rm opt}^k$ by MA method \cite{xiong2023fairbeamallocationsreconfigurable}
	\Else 
     \State calculate $\omega_{n,\rm {opt}}$ by \eqref{RISK}
	\State calculate $\mathbf{w}_{\rm opt}^k$ with $\omega_{n,\rm {opt}}$
	\EndIf
   \State \textbf{return} $\{\mathbf{w}^1_{\rm opt}, \mathbf{w}^2_{\rm opt}\cdots,\mathbf{w}^K_{\rm opt}\}$
   \EndProcedure
\end{algorithmic} 
\end{algorithm}

After $K$ hops, a single UE receives the final signal. As for the last RIS$\{K\}$, and the phase configuration can be obtained as
\begin{equation}\label{RISK}
\omega_{n, \rm {opt}}^k = \arg \{E^\text{s}_{k-1} (r^\text{s}_{k-1}(n), \theta^\text{s}_{k-1},\phi^\text{s}_{k-1})\}+
  2 \pi r^\text{s}_k(n)/ \lambda
\end{equation}
Thus, the phase of all $K$ RISs are all optimally configured. The multi-beam approach for partial illuminations can be summarized in Algorithm \ref{alg-1.0}

The proposed beam optimization method features high flexibility in the multi-hop transmission of the wireless signal. The multi-beam approach leverages not only the positions of the transmitter-receiver and RISs but also utilizes channel state information if available. The primary distinction between them lies in the elements within matrix $\mathbf{H}$ in optimization problem (P3).

\section{Simulation and Prototype Experiments}\label{Section5}

We now investigate the beamforming capabilities of RISs to demonstrate the effectiveness of the derived methods and optimization strategies in multi-hop RIS-assisted communications. Initially, we validate the relationship between beamwidth, beam direction, and the number of units based on the derived HPBW expressions. Subsequently, we analyzed the aperture efficiency under different RIS deployment strategies, confirming that the optimized deployments are effective.
\begin{figure}[htbp]
  \centering
  \subfigure[]{
  \label{BW-S}
  \includegraphics[width=0.75\linewidth]{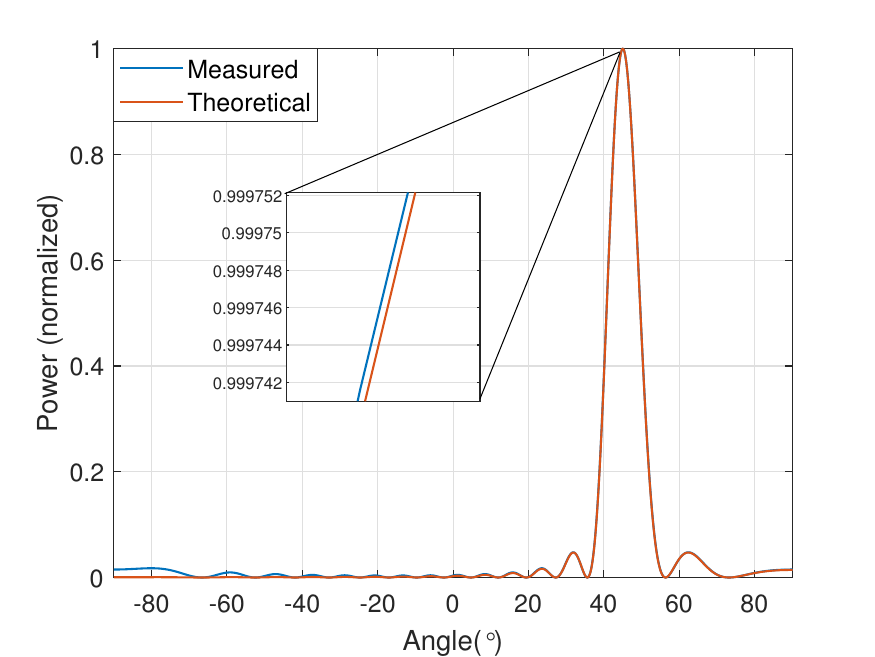}}
  \subfigure[]{
  \label{BW-M}
  \includegraphics[width=0.75\linewidth]{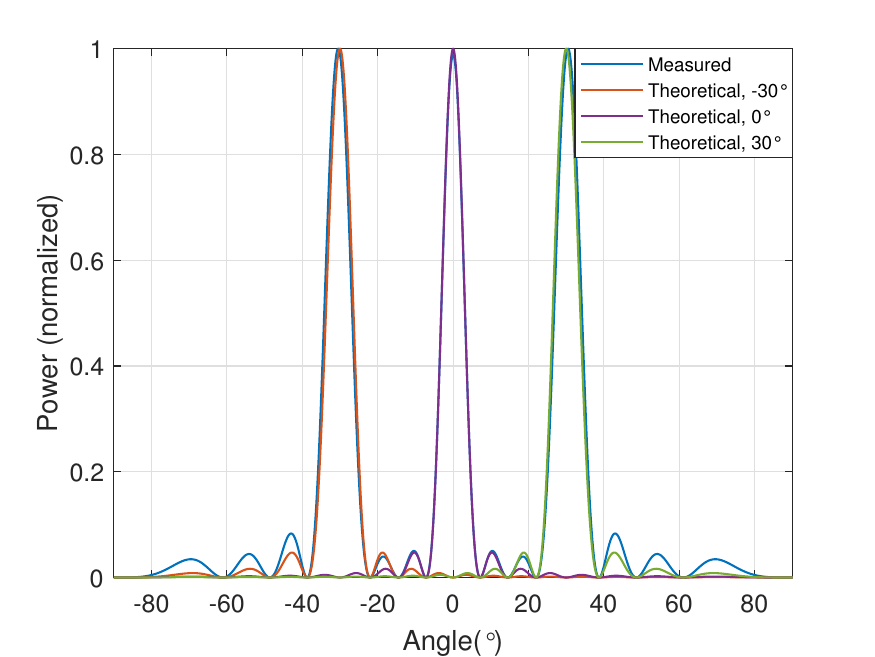}}
  \caption{Comparison of theoretical and measured half-power beamwidth. (a) Single beam. (b) Multiple beams.}
  \label{Beamwidth}
\end{figure}

Furthermore, we evaluate the impact of different aperture sizes and distances between RISs on the received signal power by employing the proposed multi-beam method and traditional single-beam method. The simulation and prototype-experiment outcomes highlight the importance of the proposed strategies for enhancing communication quality in multi-hop RIS-assisted communications. In all conducted simulations, the coordinate of the Tx remains fixed at (0, 0, 3).

\subsection{Beamwidth of the Reflecting Wave}

In Fig~\ref{Beamwidth}, we compare the radiation patterns derived from array factor theory with those obtained through simulation tests. We employed a linear RIS array with 16 units. The results in Fig.~\ref{BW-S} indicate that at a beam direction of 45°, the derived array factor closely matches the actual beam radiation pattern. Additionally, when using the MA algorithm to generate multiple beams, the main beams at $-30^\circ$, $0^\circ$, and $30^\circ$ all align with the derived, as shown in Fig.~\ref{BW-M}.

To further validate the relationship between the derived beamwidth, beam direction, and the number of units, we conduct experiments with different parameter settings and compare the theoretical beamwidth calculations with the actual beamwidths. The outcomes are illustrated in Table~\ref{beamwidth}. It can be observed that the HPBW becomes smaller with an increasing number of units. Additionally, the beam direction influences the reflected beamwidth, with beams at smaller elevation angles exhibiting narrower widths. In summary, the derived beamwidth in this work aligns closely with the actual measurements. When the number of units exceeds 16, the discrepancy can be less than $0.02^\circ$.

\begin{table}[!htbp]
\caption{Half-power beamwidth correspond to uniform linear RISs.} \label{beamwidth}
\centering
\setlength{\tabcolsep}{0.8mm}
\setstretch{1.1} 
\begin{tabular}{ccccc}
\toprule[1.5pt]
\multirowcell{2}{RIS units\\} & \multirowcell{2}{Beam direction \\$\theta$} &  \multirowcell{2}{Theoretical \\ HPBW}&  \multirowcell{2}{Measured \\ HPBW} & \multirowcell{2}{Gap}\\ \\
\midrule
8    & $30^\circ$  & $14.73^\circ$  &$14.84^\circ$&  $0.11^\circ$\\
8    & $45^\circ$  &$18.25^\circ$   &$18.38^\circ$&  $0.13^\circ$\\
8    & $60^\circ$  &$28.56^\circ$   &$28.85^\circ$&  $0.29 ^\circ$  \\
16   & $30^\circ$  & $7.33^{\circ} $          &$7.2^\circ$  &   $0.13^\circ$\\
16   & $45^\circ$  & $ 9.01^{\circ}$          &$9.02^\circ$ & $ 0.01^\circ$\\
16   & $60^\circ$  &$12.97^\circ$   &$12.99^\circ$&  $0.02^\circ$\\
32   & $30^\circ$  & $3.66^\circ$   &$3.67^\circ$ &  $0.01^\circ$\\
32   & $45^\circ$  & $4.49^{\circ}  $         & $4.49 ^\circ $      & $ 0^\circ$\\
32   & $60^\circ$  & $ 6.38^{\circ} $         & $6.38  ^\circ$      & $0^\circ$ \\
64   & $30^\circ$  &  $1.83^{\circ} $         &$1.83^\circ$ &  $0^\circ$\\
64   & $45^\circ$  & $ 2.24 ^{\circ}$         & $2.24^\circ $       & $ 0^\circ$\\
64   & $60^\circ$  &$3.18^\circ$    &$3.18 ^\circ $       & $0^\circ$\\
64   & $75^\circ$  &$6.26^\circ$    &$6.27  ^\circ$      & $0.01 ^\circ$\\
\bottomrule[1.5pt] 
\end{tabular}
\end{table}

\subsection{Aperture Efficiency in Deployment Optimization}
Deployment optimization in this work primarily targets the enhancement of aperture efficiency. By adjusting the current RIS's aperture size (number of units) or its distance from the preceding RIS based on the width of the beam reflected by the preceding RIS, higher aperture efficiency can be achieved in multi-hop setups.
For comparison, we defined the ratio of RIS effective aperture to beam illumination (EA-B ratio) in either linear or planar array as
\begin{equation}
{ABR} = \frac{L\cos \alpha}{2r\sin (\frac{\theta_h}{2})}.
\end{equation}
When the EA-B ratio \textbf{nears} $\mathbf{1}$, the theoretical aperture efficiency and achievable directional gain are maximized. We employ two successive RISs for illustration.

\begin{figure}[htbp]
\centering
\includegraphics[width=.75\linewidth]{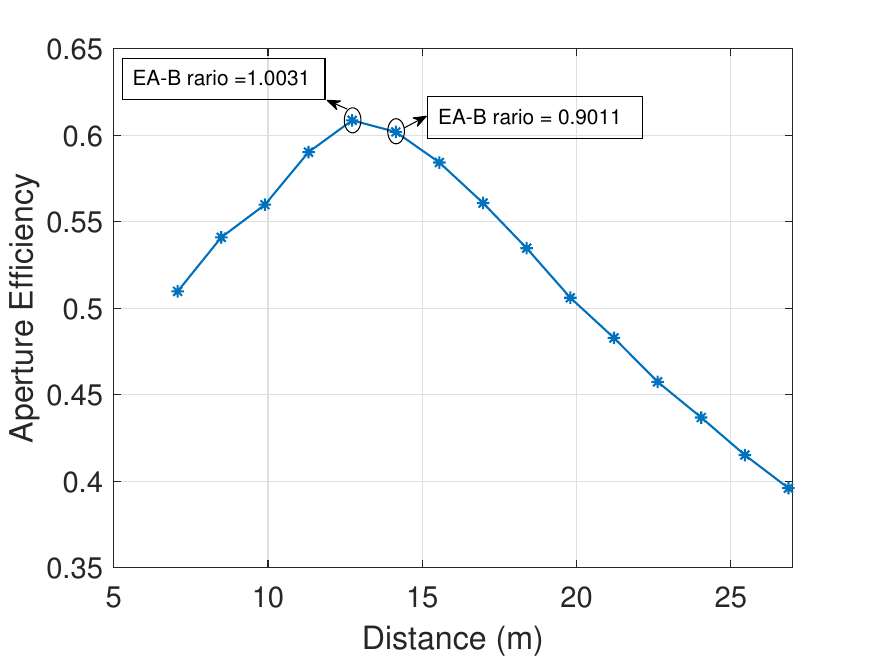}
\caption{Aperture efficiency as the function of distance between RIS$\{1\}$ and RIS $\{2\}$, with fixed RIS units $32\times 32$, and reflecting elevation angle $\theta = 45^\circ$.}
\label{ApertureEfficiencyWithDistance}
\end{figure}

\begin{figure}[htbp]
\centering
\includegraphics[width=.75\linewidth]{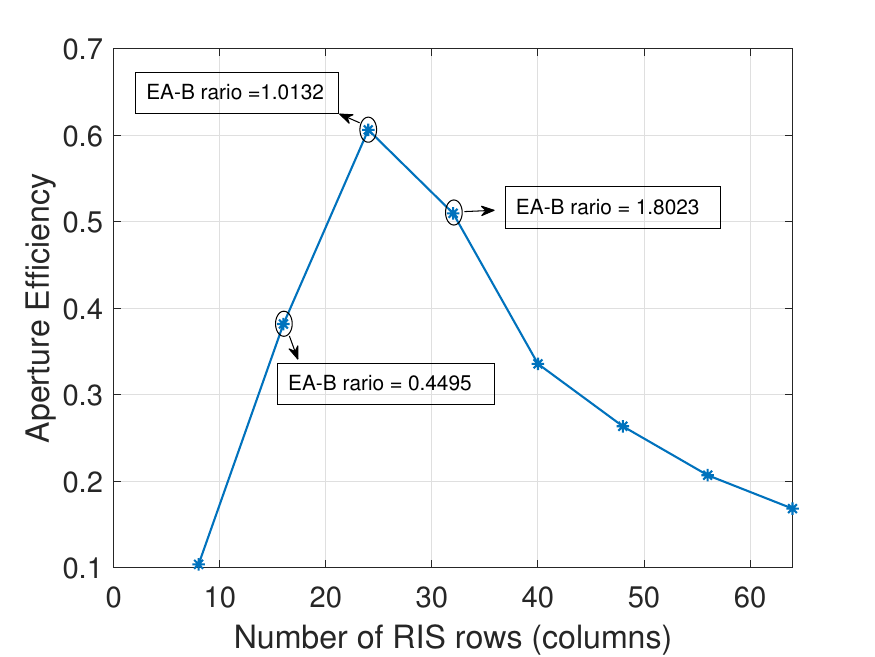}
\caption{Aperture efficiency as the function of RIS units, with RIS$\{1\}$ and RIS $\{2\}$ positioned (0, 0, 0) and (5, 0, 5).}
\label{ApertureEfficiencyWithN}
\end{figure}
\subsubsection{RIS position optimization for the problem {\rm(P1)}}

We initially simulate a scenario where the number of RIS units is fixed at $32 \times 32$, maintaining a constant reflection beam angle between RIS$\{1\}$ and RIS$\{2\}$ to ensure an unchanged beamwidth. By adjusting the coordinates of RIS$\{2\}$, we observe and record its aperture efficiency. 

The results in Fig.~\ref{ApertureEfficiencyWithDistance} show that as the distance between the two RISs increases, RIS$\{2\}$'s aperture efficiency initially rises and then declines. The specific values in Table.~\ref{ApertureEfficiency} indicate that when RIS$\{2\}$ is positioned at (9, 0, 9), the aperture efficiency is maximized, with the corresponding EA-B ratio being $1.0013$. Compared to its position at (19, 0, 19), RIS$\{2\}$ achieves an additional power gain of nearly 2 dB.

\subsubsection{RIS aperture optimization for the problem {\rm (P2)}}
In addition to the positions, we also evaluate the impact of RIS aperture size on the aperture efficiency. In this test, we fix the positions of RIS$\{1\}$ and RIS$\{2\}$ at (0, 0, 0) and (5, 0, 5), respectively, and measure the changes in aperture efficiency by simultaneously increasing the number of rows and columns (aperture size) of RISs. As shown in Fig.~\ref{ApertureEfficiencyWithN}, the results indicate that when the number of RIS units is $24\times24$, RIS$\{2\}$ achieves a maximum aperture efficiency of $0.6062$, where the corresponding EA-B ratio equals $1.0132$.

The two experiments on RIS deployment demonstrate that changing the position and aperture size of RISs significantly impacts the aperture efficiency and directional gain of RISs. This underscores the practical significance of optimizing RIS deployment in real-world scenarios.

\begin{table*}[!htbp]
\caption{Aperture efficiency and Direction Gain of RIS$\{2\}$, while RIS $\{1\}$ located at $(0,0,0)$.} \label{ApertureEfficiency}
\centering
\setlength{\tabcolsep}{1.2mm}
\setstretch{1.1} 
\begin{tabular}{ccccc}
\toprule[1.5pt]
RIS$\{2\}$ coordinate &  Inner-RIS Distance (m)&  EA-B Ratio   & Aperture efficiency &Direction gain (dBi)\\
\midrule
(5, 0, 5)    & 7.07  & $1.8023$  &0.5097&  32.16\\
(6, 0, 6)    & 8.49  &$1.5019$   &0.5409&  32.41\\
(7, 0, 7)    & 9.90  &$1.2874$   &0.5598&  32.56   \\
(8, 0, 8)   & 11.31  & 1.1264           &0.5902  &  32.79\\
$\mathbf{(9,\ 0,\ 9)}$  & $\mathbf{12.73}$  &  $\mathbf{1.0013}$         &$\mathbf{0.6086}$ &  $\mathbf{32.93}$\\
(10, 0, 10)  & 14.14  &$0.9011$   &0.6019& 32.88\\
(11, 0, 11)   & 15.56  & $0.8192$   &0.5842 &  32.75\\
(13, 0, 13)   & 18.38  & 0.6932         & 0.5346        & 32.36\\
(15, 0, 15)   & 21.21  &  0.6008          & 0.4828        &  31.92\\
(17,0,17)   & 24.04  &  0.5301         & 0.4367       &  31.48\\
(19,0,19)   & 26.87  & 0.4743         & 0.3959        &  31.06\\
\bottomrule[1.5pt] 
\end{tabular}
\end{table*}

\subsection{Aperture Field Distribution in Beam Optimization}

\begin{figure}[htbp] 
  \centering
  \subfigure[]{
  \label{S505}
  \includegraphics[width=.475\columnwidth]{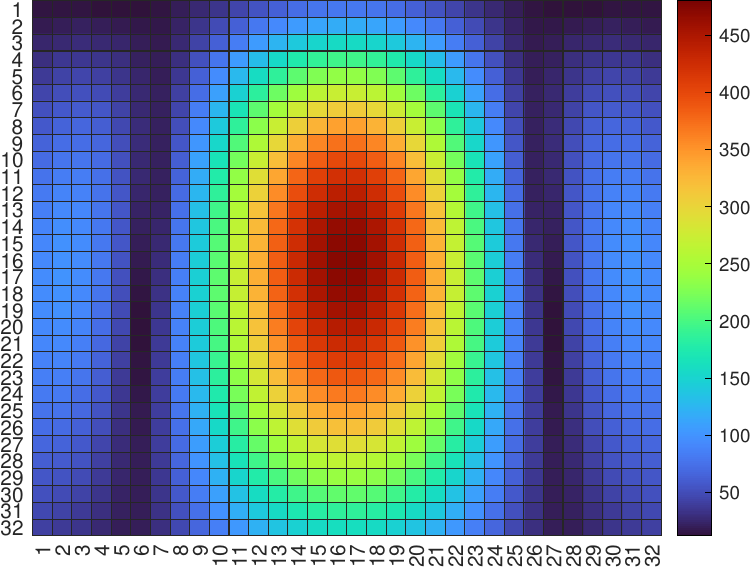}}
 \subfigure[]{
  \label{M505}
  \includegraphics[width=.475\columnwidth]{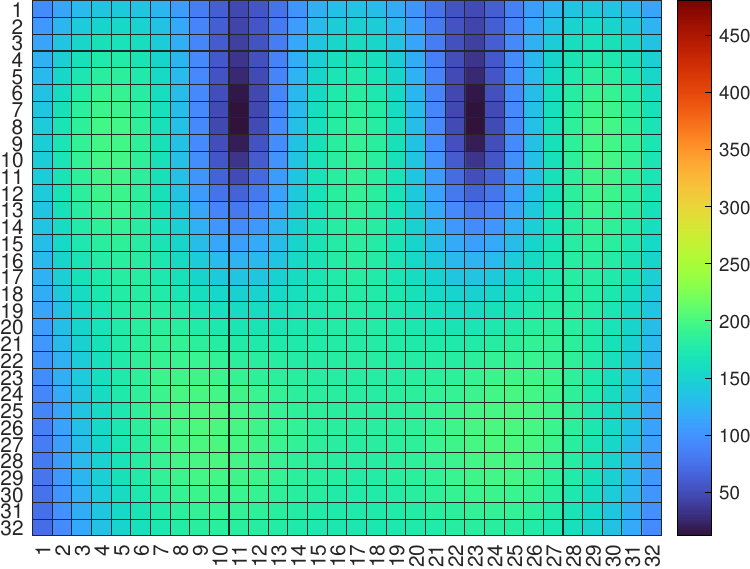}}
 \subfigure[]{
  \label{S404}
  \includegraphics[width=.475\columnwidth]{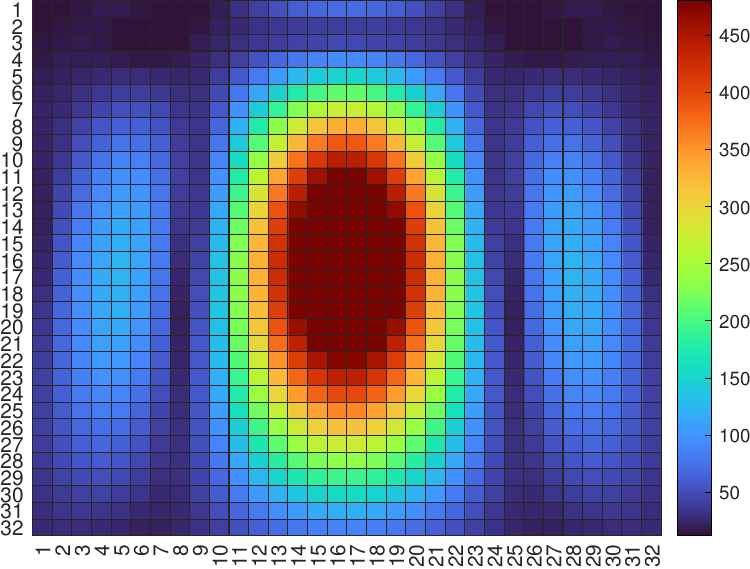}}
 \subfigure[]{
  \label{M404}
  \includegraphics[width=.475\columnwidth]{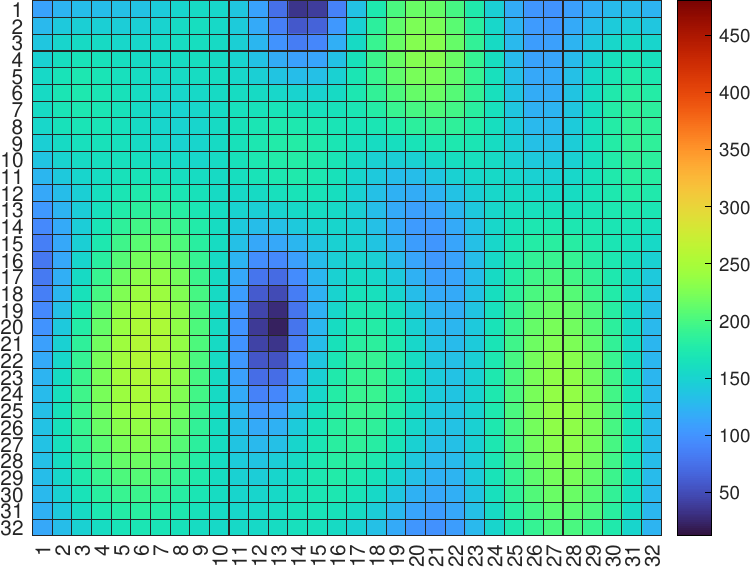}}
  \caption{The aperture field distribution of RIS$\{2\}$ in two-hop RIS-assisted communication, RIS $\{1\}$ are located at $(0,0,0)$, and $32 \times 32$ array units are employed for each RIS. (a) Single-beam scheme, with RIS $\{2\}$ located at $(5,0,5)$. (b) Multi-beam scheme, with RIS $\{2\}$ located at $(5,0,5)$. (c) Single-beam scheme, with RIS $\{2\}$ located at $(4,0,4)$. (d) Multi-beam scheme, with RIS $\{2\}$ located at $(4,0,4)$.}
\label{ApertureField}
\end{figure}
\begin{figure}[htbp]
  \centering
  \includegraphics[width=0.7\linewidth]{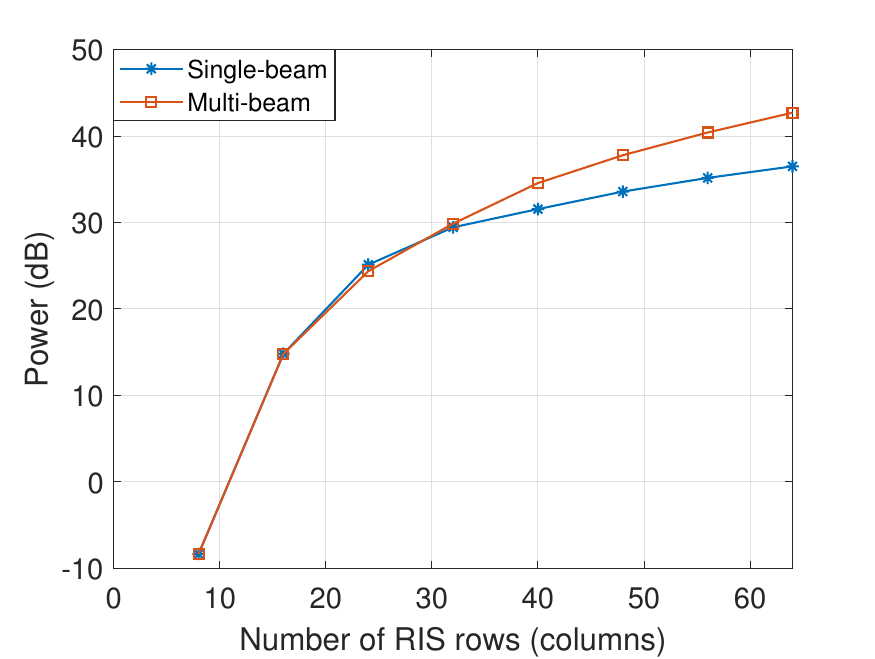}
  \caption{The received signal power at the final UE after two-hop RIS-assisted communication, as a function of the number of RIS units.}
  \label{PowerWithN}
\end{figure}

\begin{figure}[htbp]
  \centering
  \includegraphics[width=0.7\linewidth]{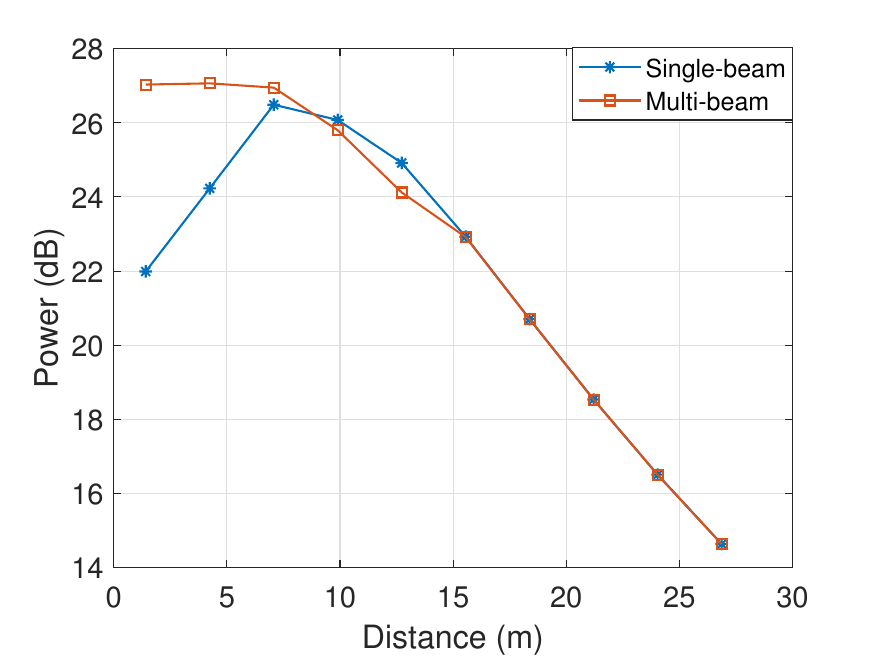}
  \caption{The received signal power at the final UE after two-hop RIS-assisted communication, as a function of the number of distances between the two RISs. The number of units remains $32\times 32$.}
  \label{PowerWithDistance}
\end{figure}

In practical scenarios where the aperture size and position of RISs are fixed and immutable, optimizing the beams remains crucial for achieving higher gains on received power, particularly in near-field environments. Based on the proposed signal model, we compare the effects on received signal power between traditional direct calculation (single-beam) methods~\cite{ma2023multi,mei2020cooperative} and the proposed multi-beam approach in partial illumination scenarios.

As aforementioned, the advantage of the multi-beam strategy lies in its ability to homogenize the aperture field distribution of RISs. To illustrate this, we generate heat maps illustrating the field distributions corresponding to both single-beam and multi-beam methods at two distinct positions of RIS$\{2\}$. Figs.~\ref{S505} and ~\ref{S404} demonstrate that using traditional single-beam methods leads to signal energy over-concentration on the central portion of the RIS board, intensifying as the distance between RISs decreases. In contrast, the multi-beam approach effectively alleviates this issue, by concurrently generating beam coverage across various regions of RIS$\{2\}$ using RIS$\{1\}$, as indicated in Figs.~\ref{M505} and \ref{M404}.



%
%
%

\subsection{Different Aperture Field Distribution Lead to Various Received Power}

To further investigate the impact of field distribution variations on received signal power, we conduct experiments comparing the multi-beam and single-beam approaches in two-hop RIS-assisted communication. The experiments consist of two primary sets: the first set observes the effect of beam optimization on received signal power with varying numbers of RIS units, while the second set examines the received signal power at different distances between RISs.

We first position Tx, Rx, and two RISs at (0, 0, 3), (5, 0, 0), (0, 0, 0), and (5, 0, 5) respectively, and varied the number of rows (columns) of the RIS. As illustrated in Fig.~\ref{PowerWithN}, when the number of rows (columns) is relatively small, the received signal power corresponding to both the single-beam and multi-beam approaches is comparable. As the number increases, the single-beam shows an advantage within a certain range. However, when the number exceeds a certain threshold (e.g., $32\times32$), the advantage of the multi-beam approach becomes increasingly evident. 

\begin{figure*}[t]
\centering
\includegraphics[width=0.85\linewidth]{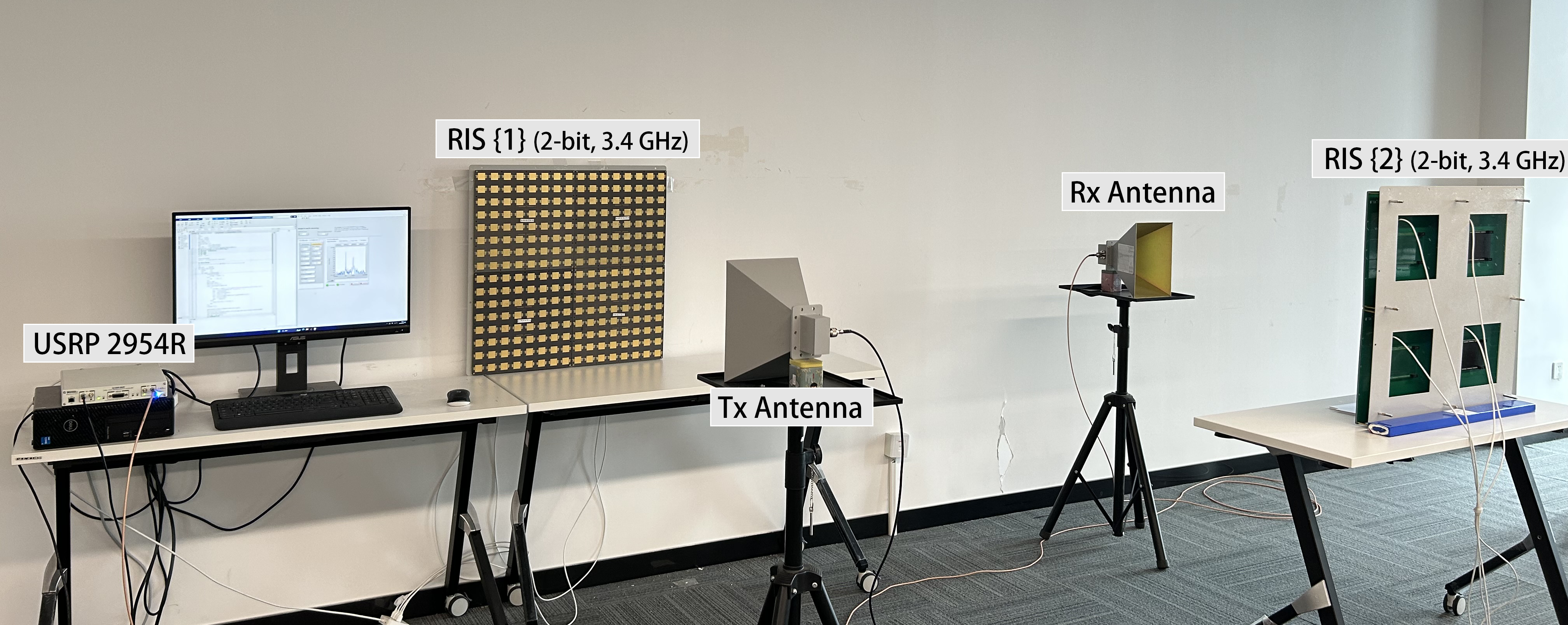}
\caption{The prototype of 2-bit RIS-assisted multi-hop wireless communication system.}
\label{system}
\end{figure*}
These findings can be attributed to the fact that when the number of RIS units is small, the beam's minimum resolution is insufficient, leading to over-illumination of the RIS aperture (as in Fig.~\ref{F2-2}). This results in similar aperture field distributions with the multi-beam and single-beam approaches. As the number of units gradually increases, the beam width and aperture size become closer (EA-B ratio approaches 1), the single-beam approach produces more regular and uniform beams, thereby slightly outperforming the multi-beam approach. However, with a further increase in the aperture size of the RIS, the multi-beam approach exhibits advantages in field distribution and demonstrates increasingly superior performance.

In another distance experiment, we maintain the positions of Tx and RIS$\{1\}$ constant and changed the position of Rx to (10, 0, 0). By adjusting the coordinates of RIS$\{2\}$ and its distance from RIS$\{1\}$, we observed the received signal power. The results in Fig.~\ref{PowerWithDistance} indicate that when the distance between RISs is relatively short, the multi-beam approach exhibits superior performance regarding received signal power. For instance, at a distance of 1.41 meters between RISs, this gain exceeds 5 dB.

From these two sets of experiments, it can be concluded that the multi-beam approach yields significant performance gains in partial illumination scenarios, validating the effectiveness of our proposed beam optimization strategy. Additionally, the results from Figs.~\ref{PowerWithN} and \ref{PowerWithDistance} also indicate that selecting the appropriate number of units and RIS positions is crucial for enhancing the received signal power, as highlighted in the deployment optimization.

%

\subsection{Experimental Results with Prototype System}

To assess the effectiveness of the proposed beam optimization method in practical settings, we conduct experiments with an operational two-hop RIS-assisted communication system. Within the framework of the present model, the phase disparities among different paths are computed directly based on the positions and configurations of the units. This characteristic facilitates beamforming without the necessity for explicit channel estimation procedures.


The prototype validations are conducted utilizing a two-hop 2-bit RIS-assisted wireless system operating at the central frequency of 3.4 GHz, as illustrated in Fig.~\ref{system}. Each RIS is composed of $16 \times 16$ units, with each unit carrying two PIN diodes, corresponding four different response combination when biased in forward and reverse respectively. The source signal with baseband frequency at 100 kHz is generated and modulated using linear frequency modulation (LFM) by employing the USRP 2954R. The signal is transmitted through a transmitting horn antenna (Tx), reflected by the two-hop RISs, and ultimately received by a receiving horn (Rx). 

\begin{figure}[htbp]
\centering
\includegraphics[width=0.85\linewidth]{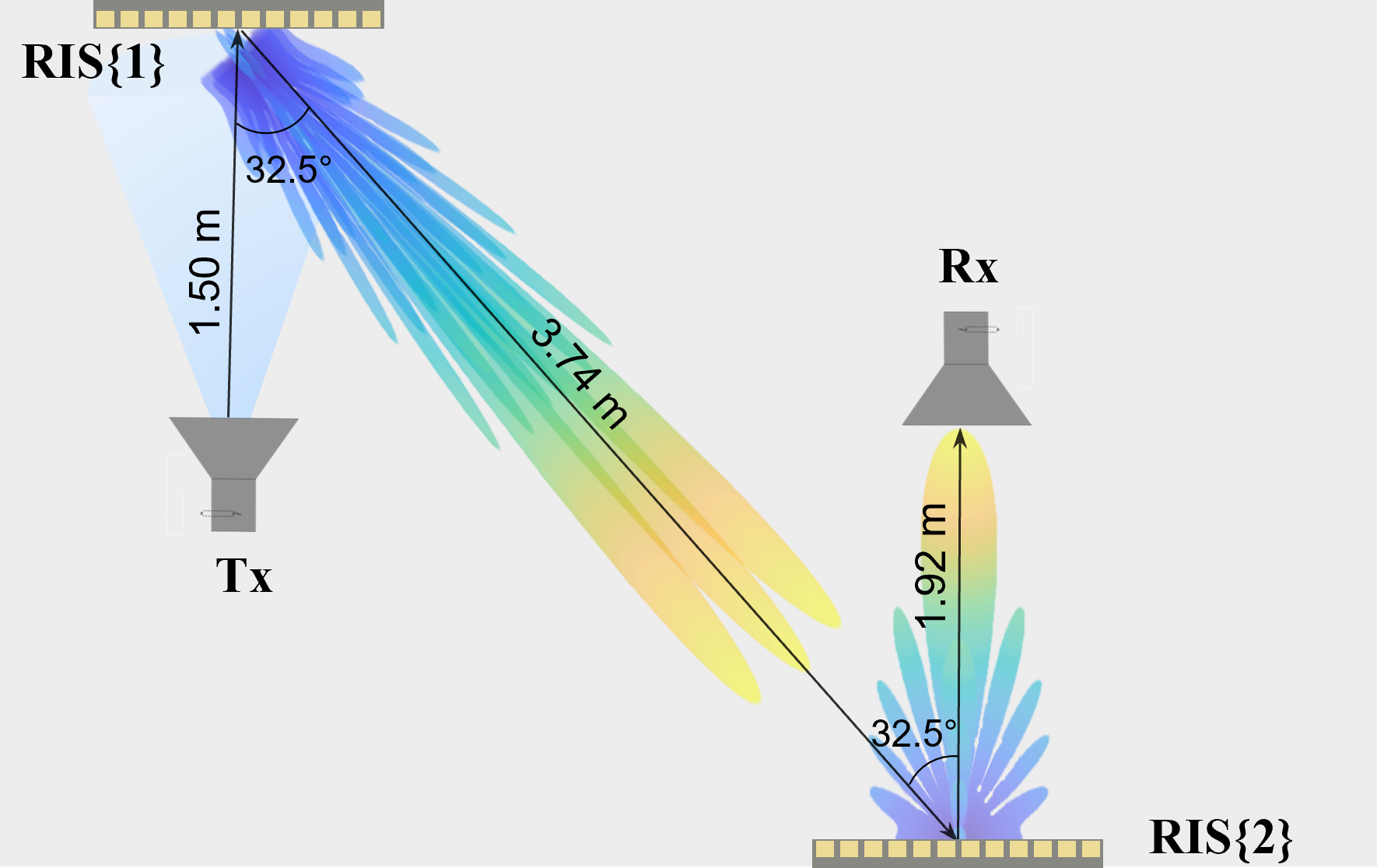}
\caption{Experimental setup in multi-hop RIS-assisted communication.}
\label{Setup}
\end{figure}

\begin{figure}[htbp] 
  \centering
  \subfigure[]{
  \label{SRIS1}
  \includegraphics[width=.45\columnwidth]{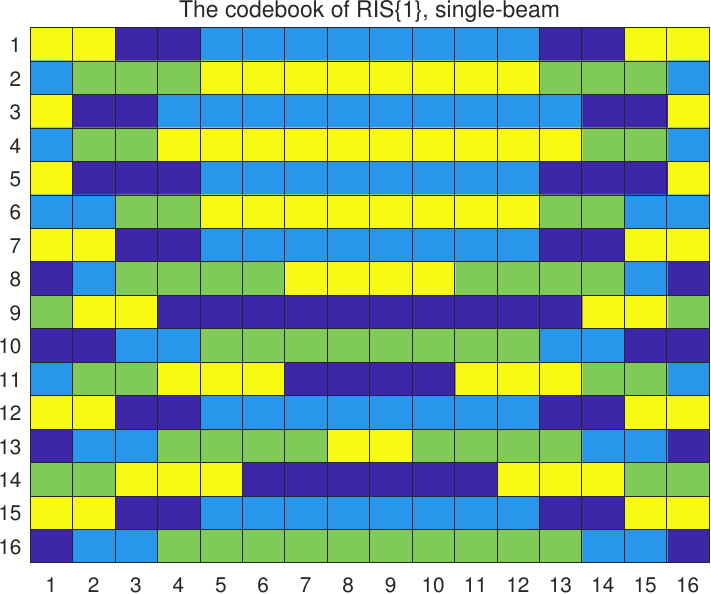}}
 \subfigure[]{
  \label{SRIS2}
  \includegraphics[width=.45\columnwidth]{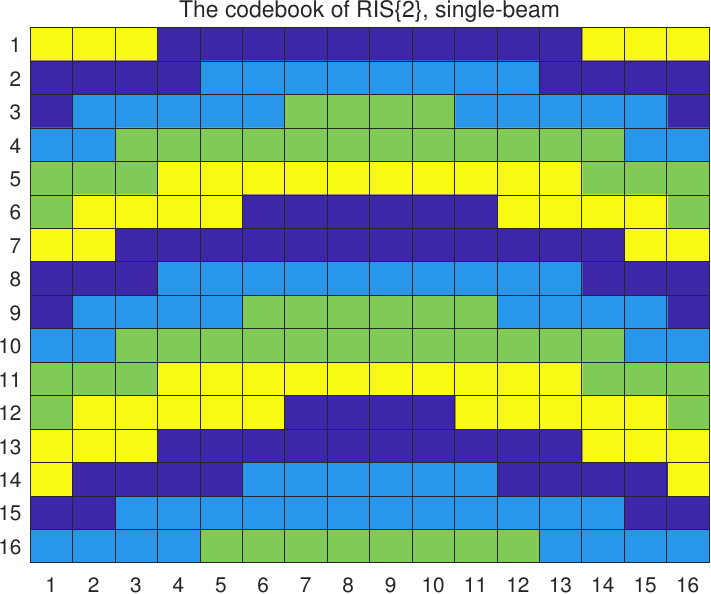}}
 \subfigure[]{
  \label{MRIS1}
  \includegraphics[width=.45\columnwidth]{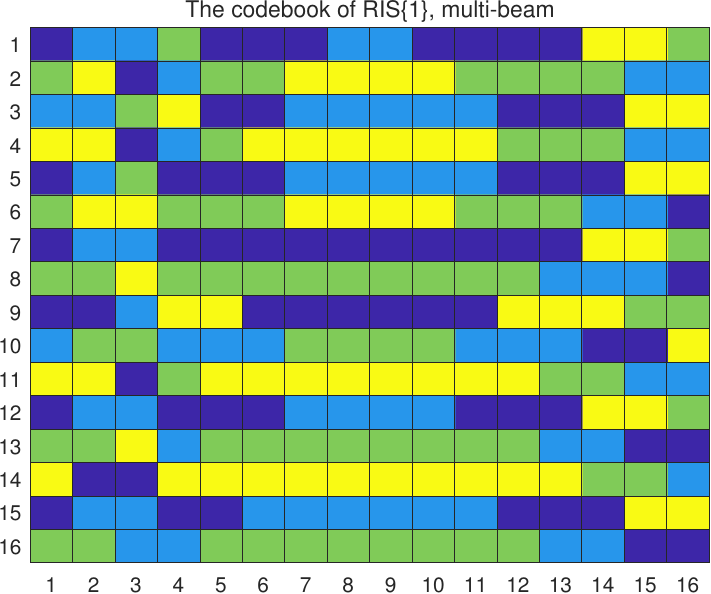}}
 \subfigure[]{
  \label{MRIS2}
  \includegraphics[width=.45\columnwidth]{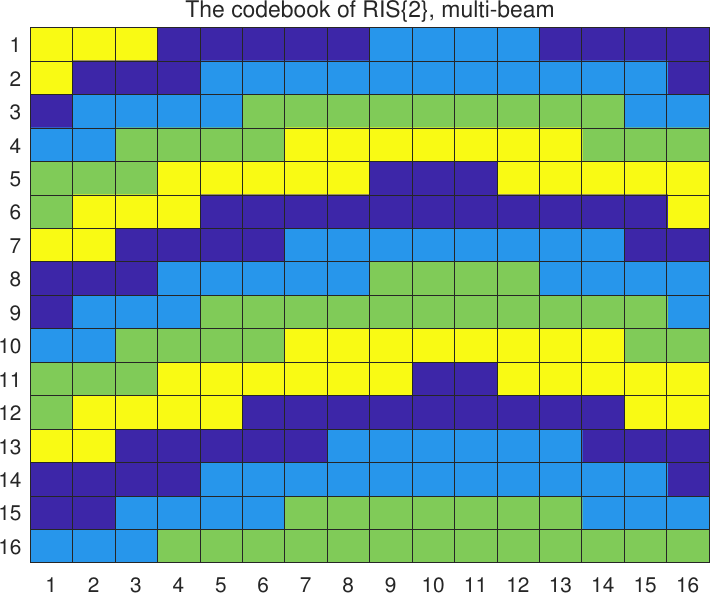}}
  \caption{The control codebooks correspond to the phase configurations of a $16 \times 16$ RIS, derived through various approaches at the operating frequency of 3.4 GHz. (a) RIS $\{1\}$ in the single beam scheme. (b) RIS $\{2\}$ in the single beam scheme. (c) RIS $\{1\}$ in the multi-beam scheme. (d) RIS $\{2\}$ in the multi-beam scheme. }
\label{codebooks}
\end{figure}

\begin{figure}[htbp]
  \centering
  \subfigure[]{
  \label{Not-configured}
  \includegraphics[width=0.7\linewidth]{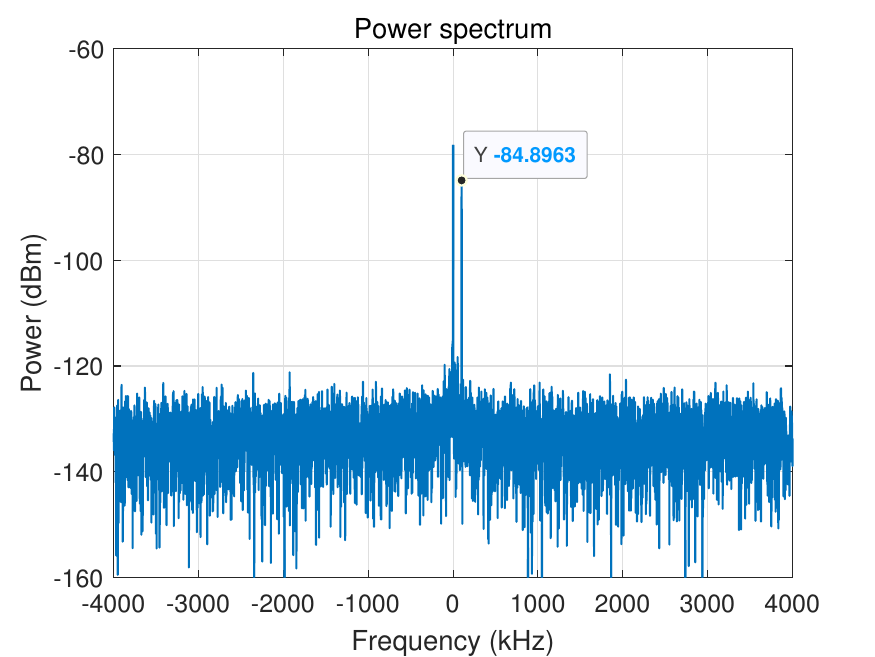}}
  \subfigure[]{
  \label{Single}
  \includegraphics[width=.7\linewidth]{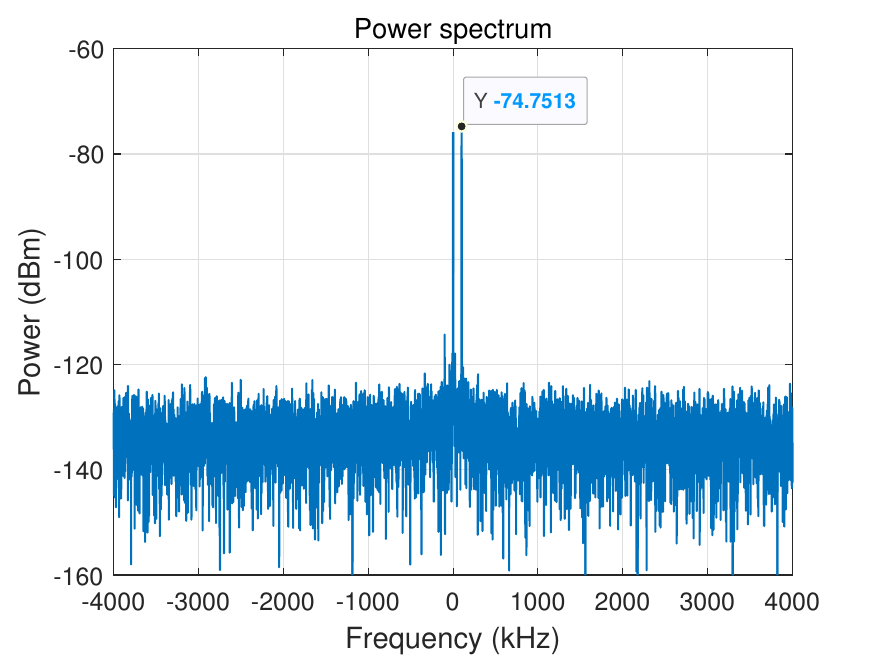}}
  \subfigure[]{
  \label{Multi}
  \includegraphics[width=.7\linewidth]{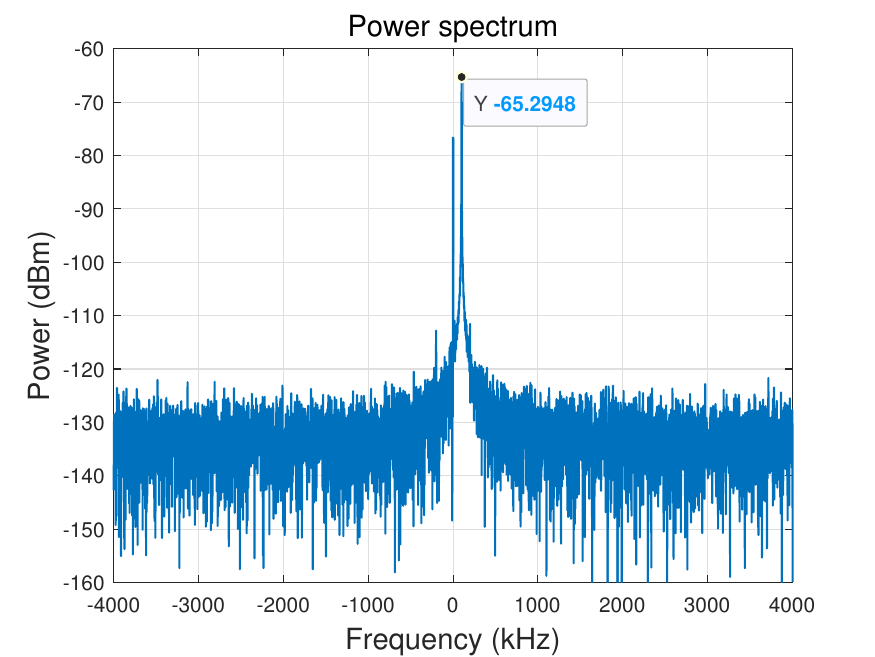}}
  \caption{The actual received power observations with RIS phase under (a) not-configured. (b) optimized through the single-beam scheme. (c) optimized through the multi-beam scheme.}
  \label{experiment}
\end{figure}

We initially measure the received signal power with no specific phase configuration. The Tx and Rx antenna is positioned in front of the RIS$\{1\}$ and RIS$\{2\}$ with a distance of $1.50$ m and $1.92$ m, respectively. The distance between the two RIS centers is set at $3.74$ meters, with a reflection angle of $32.5^\circ$, as shown in Fig.~\ref{Setup}. All the devices are at the same horizontal height. The observed power spectrum is illustrated in Fig.~\ref{Not-configured}, it shows that the received power level is approximately -84.90 dBm. 

For each of the comparative objectives, we perform different methods to obtain the phase configurations on RIS$\{1\}$ and RIS$\{2\}$ based on the proposed signal model. These configurations are then transformed into control codebooks. As shown in Fig.\ref{codebooks}. Four different colors represent four distinct phase configurations with intervals of $\pi/2$. The codebooks in Fig.~\ref{SRIS1} and Fig.~\ref{SRIS2} are obtained through phase compensation methods (belongs to single-beam method) according to the locations of the equipment~\cite{stutzman2012antenna}, while Fig.~\ref{MRIS1} and ~\ref{MRIS2} are from the proposed Algorithm~\ref{alg-1.0}.

The experimental results in Fig.\ref{experiment} indicate that, although we can achieve the optimal phase configuration for the two RIS units in the single-beam scheme, the resulting signal gain is not substantial, only around 10 dB, due to the beam illuminating an area far smaller than the RIS aperture. However, when considering the aperture efficiency of the RIS and employing a multi-beam approach to increase the illuminated area of the RIS, the achieved signal power gain is further enhanced, surpassing 19 dB compared to the unconfigured RIS phase. These outcomes underscore the effectiveness of the proposed beam optimization strategy in multi-hop RIS-assisted communications.

\section{Conclusion and Future Directions}\label{Section6}


This paper focuses on the design and optimization of multi-RIS-assisted communications, particularly in multi-hop scenarios. As a reflective aperture antenna, the aperture efficiency of RIS determines its maximum directional gain, a factor often overlooked in current beamforming schemes. This study examines the field distribution on the RIS aperture, derives a closed-form expression for the half-power beamwidth of beams reflected by RIS, and elucidates the relationship between beamwidth, RIS aperture size (number of units), and beam direction. Drawing from beamwidth and aperture considerations, the paper proposes deployment and beam optimization strategies to enhance the performance of multi-hop RIS-assisted communications without the requirement of any CSI estimation. Simulation and prototype experiments validate the accuracy of the derived results and the effectiveness of the proposed strategies. These optimization frameworks and methods are applicable across a broad spectrum of multi-hop RIS networks, including communications in tunnels and urban environments with dense blockages/obstructions.

\appendices
\section{Proof of Equations}\label{AppA}
For the equation
\[
\begin{aligned}
 \mathrm{AF}(\theta)=\sum_{n=1}^N e^{j(n-1) \psi},
\end{aligned}
\]

Multiplying both sides by $e^{j\psi}$, it can be written as
\[
(\mathrm{AF}(\theta)) e^{j \psi}=e^{j \psi}+e^{j 2 \psi}+e^{j 3 \psi}+\cdots+e^{j(N-1) \psi}+e^{j N \psi}
\]

Subtracting the last from the former reduces to
\begin{equation}
\mathrm{AF}(\theta)\left(e^{j \psi}-1\right)=\left(-1+e^{j N \psi}\right)
\end{equation}
which can also be written as
\begin{equation}
\begin{aligned}
\mathrm{AF}(\theta) & =\left[\frac{e^{j N \psi}-1}{e^{j \psi}-1}\right]=e^{j((N-1) / 2] \psi}\left[\frac{e^{j(N / 2) \psi}-e^{-j(N / 2) \psi}}{e^{j(1 / 2) \psi}-e^{-j(1 / 2) \psi}}\right] \\
& =e^{j[(N-1) / 2] \psi}\left[\frac{\sin \left(\frac{N}{2} \psi\right)}{\sin \left(\frac{1}{2} \psi\right)}\right]
\end{aligned}
\end{equation}

In fact, if the reference point is the physical center of the array, the array factor reduces to
\begin{equation}
\mathrm{AF}(\theta)=\left[\frac{\sin \left(\frac{N}{2} \psi\right)}{\sin \left(\frac{1}{2} \psi\right)}\right]
\end{equation}

which can be normalized and approximated as 
\[
\mathrm{AF}(\theta) \simeq\left[ \frac{\sin \left(\frac{N}{2} \psi\right)}{\frac{N}{2} \psi}\right]
\]

\section{}\label{AppB}

\begin{figure}[htbp]
\centering
\includegraphics[width=0.75\linewidth]{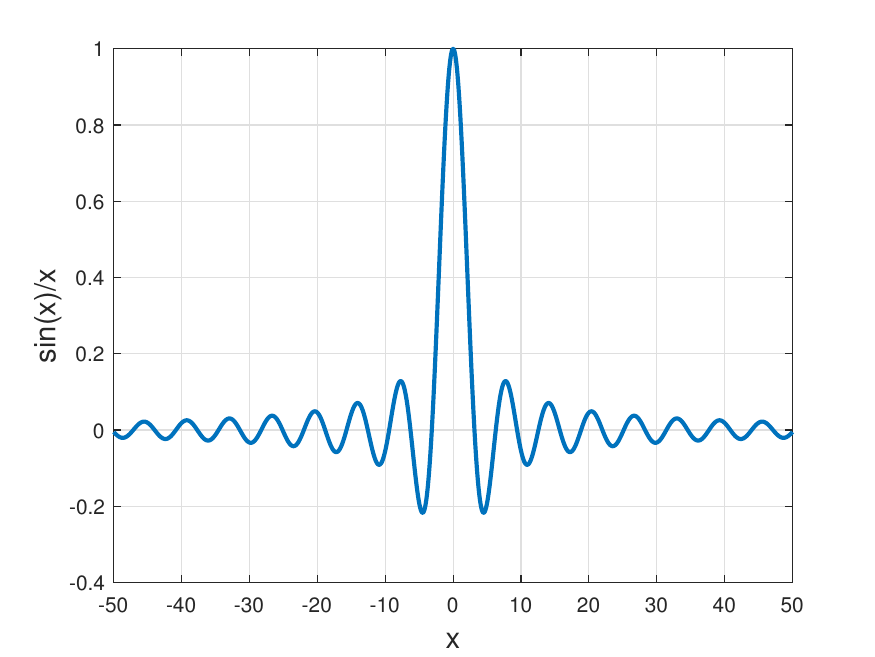}
\caption{Plot of $f(x) = \frac{\sin x}{x}$ function}
\label{x}
\end{figure}

Plot of function $f(x) = \frac{\sin x}{x}$, and $f(x,y) = \frac{\sin x}{x}\frac{\sin y}{y}$.
\begin{figure}[htbp]
\centering
\includegraphics[width=0.75\linewidth]{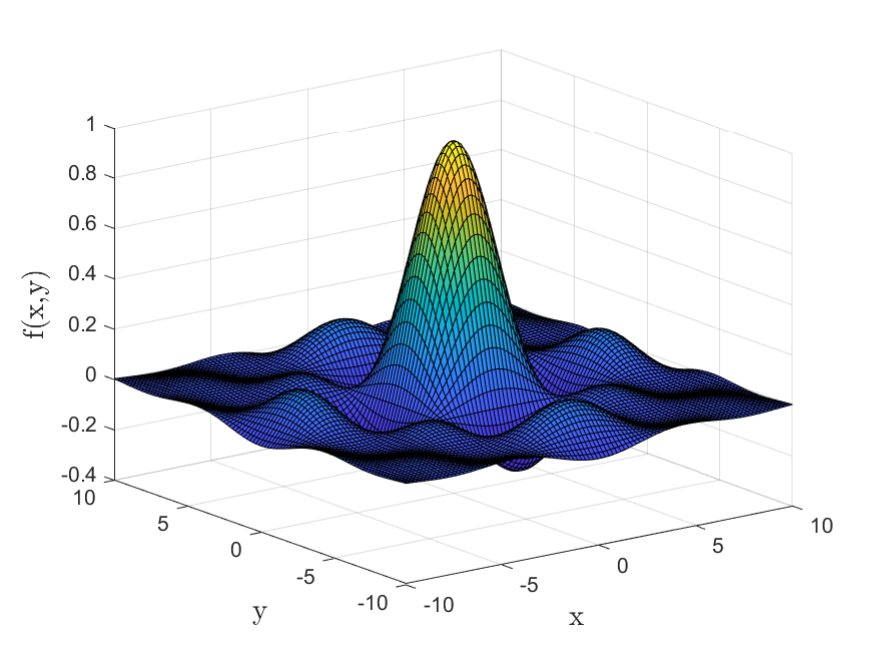}
\caption{Plot of $f(x,y) = \frac{\sin x}{x}\frac{\sin y}{y}$ function}
\label{xy}
\end{figure}


%

\bibliographystyle{IEEEtran}
\bibliography{Reference}
%
%
%
%
%
%
%
%
%
%

\newpage

 


\vspace{11pt}


\vfill
\end{document}